\documentclass[11pt]{article}
\usepackage{amsmath,amsthm,amssymb,listings,xcolor}
\date{}
\newtheorem{theorem}{Theorem}
\newtheorem{lemma}{Lemma}

\newtheorem{proposition}{Proposition}

\newtheorem{open problem}{Open problem}

\begin{document}

\title{The Covering Radius of the Reed--Muller Code RM(1,9)  is 242}
\author{Qichun~Wang \footnote{School of Computer and Electronic Information/School of Artificial Intelligence,
Nanjing Normal University, Nanjing, China. Email: qcwang@fudan.edu.cn.}, \ Yueying~Lou\footnote{School of Computer and Electronic Information/School of Artificial Intelligence,
Nanjing Normal University, Nanjing, China. Email: 232212001@njnu.edu.cn.
} \ and \ Zifan~Wang\footnote{Nanjing Foreign Language School, Nanjing, China. Email: 114064796@qq.com.}}
\maketitle

\begin{abstract}
 What is the covering radius of the Reed--Muller code  RM(1,9) ? This is a long-standing open problem in
cryptography and coding theory.  In this paper, we resolve this problem, and prove that the covering radius of RM(1,9) is 242.  We
exclude nonlinearity 244 by obtaining a cubic Boolean function achieving a sharp Walsh $\ell^1$ bound. Its equality cases reduce either to an impossible cubic hyperplane restriction or to two homogeneous cubic forms in the Brier--Langevin classification. To exclude nonlinearity 243, we deduce the inequality ${\Vert W_g\Vert }_{1}+2|W_g(0)|\le 10368$, and then reduce either to a spectrum bounded by 24 or to the same expression to be at least 10496.
\end{abstract}

\par {\bf Keywords:}  Boolean functions, nonlinearity, Walsh transform, Reed--Muller codes, cubic forms.

%% \linenumbers

%% main text
\section{Introduction}
Boolean functions play an important role in both cryptography and coding theory \cite{Car,Logachev}. The covering radius of RM(1,$m$) is the maximum nonlinearity of $m$-variable Boolean functions, which is known for $m$ even. In 1980, Mykkeltveit proved that the covering radius of RM(1, 7) is 56 \cite{Mykkeltvei}. Hou then gave a simpler proof for this in \cite{Hou}. However, despite more than four decades of effort, the covering radius of RM(1,9)  has yet to be resolved.

The general upper bound for RM(1, 9) is $244$. In 1996, Hou proved that the covering radius of RM(1, 9) in RM(4, 9) is 240 \cite{Hou2}. For $r\ge 5$, the covering radius of RM(1, 9) in RM($r$, 9) is also still unknown. There are 9-variable  Boolean functions with nonlinearity 242 in the generalized rotation symmetric class \cite{Kavut,Kavut2}, thus the covering radius of RM(1,9) is between 242 and 244.

For odd $m > 9$, the covering
radius of RM(1,$m$) is completely open. Asymptotically, the lower bound for the covering
radius of RM(1,$m$)  is essentially as large as the classical upper bound \cite{Sch}. For more results on the covering radius of RM($r, m$), we refer to \cite{Dough,Elim,Gao0,Gao,Gillot,Wang}.

The purpose of this paper is to determine the covering radius of RM(1,9), and prove the following theorem.
\begin{theorem}\label{thm:main}
For every 9-variable Boolean function $f$,
\begin{equation}\label{eq:main}
nl(f)\le242.
\end{equation}
Equivalently, the covering radius of  RM(1,9) is 242.
\end{theorem}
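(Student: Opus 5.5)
Since the classical bound gives $nl(f)\le 2^{8}-2^{4}=240+4=244$ after rounding (the bound $2^{m-1}-2^{m/2-1}$ equals $256-11.31\ldots$, and nonlinearity must be an integer), my plan is to exclude the two values $nl(f)=244$ and $nl(f)=243$ separately. Throughout I use the Walsh transform $W_f(a)=\sum_x(-1)^{f(x)+a\cdot x}$ with $nl(f)=256-\frac12\max_a|W_f(a)|$ and Parseval $\sum_a W_f(a)^2=2^{18}$. Nonlinearity $244$ means $\max|W_f|\le 24$. Since $W_f(a)\equiv 0\pmod 4$ when $\deg f\le 8$ (weight even), and $\sum W_f^2=2^{18}$ with $|W_f|\le 24$ forces a flat-ish spectrum on values $\{0,\pm 8,\pm16,\pm24\}$ after divisibility, I first reduce to low degree. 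The idea is to use the derivative/restriction trick: restricting $f$ to an affine hyperplane gives an $8$-variable function $g$ whose spectrum is $W_g(b)=\frac12(W_f(a)\pm W_f(a+e))$. I would also use the covering radius of RM(1,9) in RM(4,9) being $240$ (Hou) as a tool to force the existence of a degree-$\ge5$ part, and then combine this with counting of Walsh values.

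For excluding $244$, I will follow the abstract's hint. I will derive a sharp $\ell^1$-type inequality for $W_f$: from $|W_f(a)|\le 24$ and Parseval, $\|W_f\|_1\ge 2^{18}/24$, while sums of $W_f$ over cosets of subspaces equal $2^{\dim}$ times sums of $(-1)^f$ over dual cosets, which gives upper bounds on $\|W_f\|_1$ restricted to cosets. Equality in the resulting chain forces the support structure of the spectrum to be extremal, which in turn forces $f$ (after an affine equivalence and adding an affine function) to be cubic. I will then split into two cases. In the first, a hyperplane restriction would be an $8$-variable cubic with prescribed spectrum, which I would rule out using the known classification of cubic forms in $8$ variables. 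In the second, the function is a homogeneous cubic form in $9$ variables, and I would check against the Brier--Langevin classification of cubic forms in nine variables, computing the maximal Walsh value for each of the two surviving classes and exhibiting a value exceeding $24$.

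For excluding $243$, I set $g=f$ and normalise so that the maximum $|W_g|=26$ (values $\equiv 2\pmod 4$ forced by odd weight, so $W_g(a)\equiv 2 \pmod 4$ for all $a$). I then aim to prove the inequality $\|W_g\|_1+2|W_g(0)|\le 10368$ by summing the bound over a suitable partition of $\mathbb F_2^9$ into hyperplane pairs $\{a,a+e\}$ and using the $8$-variable nonlinearity bound $120$ (covering radius of RM(1,8)) on the restrictions. On the other side, I will use that all $|W_g(a)|\in\{2,6,\dots,26\}$ with Parseval to get a lower bound on $\|W_g\|_1$. If every value satisfies $|W_g|\le 24$, we are done, since then the values lie in $\{2,\ldots,22\}$; otherwise I expect to show $\|W_g\|_1+2|W_g(0)|\ge 10496>10368$, a contradiction.

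I expect the main obstacle to be the equality-case analysis in the $244$ case: proving that equality in the $\ell^1$ bound forces a cubic, and then the finite verification against the Brier--Langevin classes. Getting the constant $10368$ exactly right is the delicate part of the $243$ case.
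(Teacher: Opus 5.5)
Your outline has the right shape: an $\ell^1$ bound with two equality cases for $244$, and a weighted bound $10368$ set against $10496$ for $243$. But the key construction is missing, and every inequality is attached to the wrong function.

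\textbf{The $244$ case.} The cubic in the paper is not $f$. It is the residue function $g=(W_f-8)/16\bmod 2$ on the frequency space.
\begin{itemize}
\item The integer multilinear expansion of $W_f$ in the frequency variable (Lemma~\ref{lem:residue}) shows that, for $m=9$, the residue of $W_f$ modulo $2^r$ has degree at most $r-1$.
\item With Parseval and the minimum weight $128$ of $RM(2,9)$, this first forces $W_f\equiv 8\pmod{16}$ and $W_f\in\{\pm8,\pm24\}$, not merely values in $\{0,\pm8,\pm16,\pm24\}$. It then makes $g$ cubic.
\item Since $(-1)^g$ tracks the signs of $W_f$, self-adjointness gives $((-1)^f,W_g)=(W_f,(-1)^g)=-10240$, so $\|W_g\|_1\ge 10240$.
\end{itemize}
The sharp bound $\|W_g\|_1\le 10240$ of Theorem~\ref{thm:l1} is a statement about \emph{cubic} spectra. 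It rests on their divisibility: $8\mid W_g$, $W_g\bmod 16$ constant, and low-degree residues modulo $32$ and $64$. Nothing like it holds for $\|W_f\|_1$ of an arbitrary $f$:
\begin{itemize}
\item Your H\"older bound $\|W_f\|_1\ge 2^{18}/24$ is satisfied by the hypothetical spectrum, since $\|W_f\|_1=11264<2^{27/2}$.
\item Coset sums of $W_f$ are signed sums and give no upper bound on $\|W_f\|_1$. So your chain has no equality case to exploit.
\end{itemize}
Nor is there any mechanism forcing $f$ itself to be cubic. By the very result of Hou that you cite, every $9$-variable function of degree at most $4$ has nonlinearity at most $240$. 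Such a reduction would end the proof at once, and the Brier--Langevin analysis would be pointless.

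Everything downstream also comes from this $g$: the case (B) restriction $f=1+r$ on $E$, and the involution $T^2(g)=g$ that yields $J(J(P))=P$ and $C(P)=0$ and filters the classes to $282$, $290$ and $0$. The threshold to beat is $|W_g|>48$, not $24$. You must also enumerate the quadratic parts compatible with reciprocity, not just the homogeneous cubic form.

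\textbf{The $243$ case.} The same confusion breaks this part. You want $\|W_f\|_1+2|W_f(0)|\le10368$ for $f$ itself, from hyperplane pairs and the $8$-variable bound $nl\le120$. That bound only says each restriction has a Walsh value of magnitude at least $16$. This is a lower bound on $\|W\|_\infty$ and says nothing from above about an $\ell^1$ norm. In the paper the inequality is Theorem~\ref{thm:weighted}, applied to the cubic $g=(W_f-10)/16\bmod 2$. The missing steps are:
\begin{itemize}
\item After normalizing $W_f\equiv 2\pmod 8$, the quadratic residue $(W_f-2)/8\bmod 2$ is forced to be $1$. Hence $T=W_f-2\in\{\pm8,\pm24\}$.
\item If $f(0)=0$, flipping $f$ at the origin gives a function with Walsh transform $T$, which the $244$ case excludes. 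This is the ``spectrum bounded by $24$'' alternative; it is not a Parseval triviality about $W_f$.
\item If $f(0)=1$, the energy count gives $56$ entries of magnitude $8$ and $456$ of magnitude $24$. Adjointness then gives $2W_g(0)-((-1)^f,W_g)=10496$, contradicting $\|W_g\|_1+2|W_g(0)|\le 10368$.
\end{itemize}
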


We exclude nonlinearity 244 through obtaining a cubic Boolean function achieving a sharp Walsh $\ell^1$ bound with two equality cases. One case is eliminated by eight-variable restrictions. In the other, a residue duality imposes two explicit conditions on the homogeneous cubic part. The classification of Brier and Langevin \cite{BrierLangevin,ClassificationData} leaves only two nonzero forms.  For both forms, the
reciprocity equations reduce them to finitely many normal forms,
and each normal form has a Walsh coefficient of absolute value at least 80.

For nonlinearity 243, we deduce the inequality ${\Vert W_g\Vert }_{1}+2|W_g(0)|\le 10368$, and then either lead back to the previously excluded spectrum, or give a numerical contradiction with a gap of 128.

The rest of the paper is organized as follows. In Section 2, the necessary
background is established. We obtain a cubic Boolean function achieving a sharp Walsh $\ell^1$ bound in Section 3, and exclude nonlinearity 244 in Section 4. In Section 5, we exclude nonlinearity 243 and prove the main theorem. We end in Section 6 with conclusions.

\section{Preliminaries}
Let $\mathbb{F}_{2}^{m}$ be the $m$-dimensional vector space over the binary finite field $\mathbb{F}_{2}$. A Boolean function of $m$ variables is a map from $\mathbb{F}_{2}^{m}$ to $\mathbb{F}_{2}$. We denote by $\mathbb{B}_{m}$ the set
of all $m$-variable Boolean functions.

We identify a Boolean function $f$ with its unique squarefree polynomial over $\mathbb{F}_2$; its degree, denoted by $\deg(f)$, is the degree of that polynomial. A Boolean function is affine if its degree is at most one. A ``cubic'' function throughout this paper has degree at most three.

For $I\subseteq[m]=\{1,\ldots,m\}$, write $x^I=\prod_{i\in I}x_i$. Polynomial sums are over $\mathbb{F}_2$ unless an integer-valued expansion is explicitly indicated.

 Let $f\in \mathbb{B}_m$. the Hamming weight of $f$, denoted by $wt(f)$, is the number of inputs at which $f$ takes the value 1. The function $f$ is  balanced if $wt(f)=2^{m-1}$.

Let $f, g\in \mathbb{B}_m$. The Hamming distance between $f$ and $g$ is $d(f,g)=wt(f+g)$. The nonlinearity of $f$, denoted by $nl(f)$, is its minimum Hamming distance from the set of affine Boolean functions.

For a real function $f$ on $\mathbb{F}_2^m$, the Fourier-Hadamard transform of $f$ is the integer-valued function over $\mathbb{F}_{2}^m$
defined by
\[
H_mf(\omega)=\sum_{x\in \mathbb{F}_{2}^m}f(x)(-1)^{\omega \cdot x}.
\]
Writing $(f_1,f_2)=\sum_{x}f_1(x)f_2(x)$, character orthogonality gives $H_m^2=2^mI$, $\sum_{\omega}(H_mf)(\omega)^2=2^m\sum_{\omega}f(\omega)^2$, and $(H_mf_1,f_2)=(f_1,H_mf_2)$ \cite{Car}.

The Walsh transform of $f\in \mathbb{B}_m$ is the function $W_f=H_m(-1)^f$. Parseval gives
\[
\sum_{a}W_f(a)^2=2^{2m}.
\]
Clearly, the nonlinearity of $f$ can be determined by
\[
nl(f)=2^{m-1}-\frac{1}{2}{\Vert W_f\Vert }_{\infty}.
\]
By the Parseval formulae, every 9-variable Boolean function has a ${\Vert W_f\Vert }_{\infty}$ at least 24, and its nonlinearity is upper bounded by 244.

 The Reed--Muller code RM$(d,m)$ consists of the truth tables of all $m$-variable Boolean functions of degree at most $d$. The covering radius of a binary code $C$ is the maximum, over all binary words of its length, of their minimum Hamming distance to $C$.
It is known that RM$(d,m)$ has minimum Hamming distance $2^{m-d}$ \cite{Car,Kasami1,MC}, which we state as the following lemma.
\begin{lemma}\label{lem:minimum-weight}
An $m$-variable nonzero Boolean function of degree at most $d$ has weight at least $2^{m-d}$, where $d\le m$.
\end{lemma}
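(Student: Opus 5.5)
We argue by induction on the number of variables $m$, with the degree bound $d$ fixed relative to $m$ (the statement is vacuous unless $d\le m$). Let $f\neq 0$ with $\deg(f)\le d$. If $d=m$ there is nothing to prove, since $wt(f)\ge 1=2^{0}$. The same holds if $m=0$, because then $d=0$ and $f$ is the nonzero constant.

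For $m\ge 1$ and $d<m$, we single out the variable $x_m$ and write, as polynomials over $\mathbb{F}_2$,
\[
f(x',x_m)=f_0(x')+x_m\,f_1(x'), \qquad x'\in\mathbb{F}_2^{m-1}.
\]
The squarefree representation of $f$ shows that $\deg(f_0)\le d$ and $\deg(f_1)\le d-1$. The restrictions of $f$ to the hyperplanes $x_m=0$ and $x_m=1$ are $f_0$ and $f_0+f_1$, so
\[
wt(f)=wt(f_0)+wt(f_0+f_1).
\]
We then split into three cases.

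\emph{Case $f_1=0$.} Here $f_0=f\neq0$ has degree at most $d\le m-1$. By induction, $wt(f_0)\ge 2^{m-1-d}$, so $wt(f)=2\,wt(f_0)\ge 2^{m-d}$.

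\emph{Case $f_0=0$.} Here $f_1\neq0$ has degree at most $d-1\le m-1$. By induction, $wt(f)=wt(f_1)\ge 2^{(m-1)-(d-1)}=2^{m-d}$.

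\emph{Case $f_0\neq0$ and $f_1\neq0$.} This is the only step needing care. We use the triangle inequality $wt(f_0)+wt(f_0+f_1)\ge wt(f_1)$. Since $f_1$ is a nonzero $(m-1)$-variable function of degree at most $d-1$, induction gives $wt(f_1)\ge 2^{m-d}$. In the degenerate case $d=0$, $f_1$ is necessarily $0$, so this case does not occur.

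All three cases give $wt(f)\ge 2^{m-d}$, which closes the induction. The main obstacle is only bookkeeping. We must check that the induction hypothesis applies in each case with the correct pair of parameters, namely $(m-1,d)$ with $d\le m-1$, or $(m-1,d-1)$. We must also note that the degree drop for $f_1$ is exactly what makes the mixed case work.
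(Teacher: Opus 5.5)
Your proof is correct. It is the classical $(u\mid u+v)$ induction, with $u=f_0\in RM(d,m-1)$ and $v=f_1\in RM(d-1,m-1)$, giving $wt(f)\ge\min(2\cdot 2^{m-1-d},\,2^{m-d})$, which is how the sources the paper cites (e.g.\ MacWilliams--Sloane, Carlet) prove the minimum distance of $RM(d,m)$; the paper itself gives no argument and only quotes this known fact. As a minor tidying, your second and third cases merge into the single case $f_1\neq0$, where $wt(f)\ge wt(f_1)$, and this also makes the separate remark about $d=0$ unnecessary.
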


Below, inner products and norms are unnormalized sums over their indexing sets.

\section{Reducing to a cubic form problem}
\begin{lemma}\label{lem:residue}
Let $f\in \mathbb{B}_m$. Its Walsh transform has the integer expansion
\begin{equation}\label{eq:integer-expansion}
W_f(a)=\sum_{I\subseteq[m]}(-2)^{|I|}a^I
                 \sum_{x:\,x_i=1\ (i\in I)}(-1)^{f(x)}.
\end{equation}
In particular, for $m=9$, if $W_f\equiv c\pmod{2^r}$ and $2\le r\le4$, then
\begin{equation}\label{eq:residue-degree}
\deg\left(\frac{W_f-c}{2^r}\bmod2\right)\le r-1.
\end{equation}
\end{lemma}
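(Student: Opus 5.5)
The plan is to first derive \eqref{eq:integer-expansion} from a product expansion of the character, and then read \eqref{eq:residue-degree} off the $2$-adic valuations of the coefficients, using uniqueness of multilinear expansions over the integers. For $a,x\in\mathbb{F}_2^m$, viewed as $0/1$ integer vectors, we have $(-1)^{a_ix_i}=1-2a_ix_i$. Hence
\[
(-1)^{a\cdot x}=\prod_{i=1}^m(1-2a_ix_i)=\sum_{I\subseteq[m]}(-2)^{|I|}a^Ix^I .
\]
Substituting this into $W_f(a)=\sum_x(-1)^{f(x)}(-1)^{a\cdot x}$ and exchanging the two finite sums gives \eqref{eq:integer-expansion}, because $x^I=1$ exactly when $x_i=1$ for all $i\in I$.

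For the second part, write $S_I=\sum_{x:\,x_i=1\,(i\in I)}(-1)^{f(x)}$. This is a sum of $2^{m-|I|}$ signs, so $S_I$ is even whenever $|I|\le m-1$, and $S_I=\pm1$ when $I=[m]$. Now let $m=9$ and suppose $W_f\equiv c\pmod{2^r}$, so that $g=(W_f-c)/2^r$ is an integer-valued function on $\{0,1\}^9$. Every integer-valued function on $\{0,1\}^9$ has a unique multilinear expansion $g(a)=\sum_I d_Ia^I$ with integer coefficients. The coefficients are given by M\"obius inversion, $d_I=\sum_{J\subseteq I}(-1)^{|I|-|J|}g(1_J)$, where $1_J$ is the indicator vector of $J$.

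Applying this uniqueness to $W_f-c$ and comparing with \eqref{eq:integer-expansion}, we get for every $I\neq\emptyset$
\[
2^r d_I=(-2)^{|I|}S_I .
\]
Now estimate the $2$-adic valuation $v_2$. If $1\le|I|\le 8$, then $v_2\bigl((-2)^{|I|}S_I\bigr)\ge|I|+1$, so $v_2(d_I)\ge |I|+1-r$, and this is at least $1$ as soon as $|I|\ge r$. If $I=[9]$, then $v_2(d_I)=9-r\ge5$, using $r\le4$. In either case $d_I$ is even whenever $|I|\ge r$.

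Reducing mod $2$, the Boolean function $g\bmod 2$ equals $\sum_I (d_I\bmod 2)\,a^I$. This is already a squarefree polynomial over $\mathbb{F}_2$, so it is the algebraic normal form of $g \bmod 2$. Only monomials with $|I|\le r-1$ survive, which is exactly \eqref{eq:residue-degree}.

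The step needing the most care is the coefficient comparison. One must make sure that uniqueness is applied to the integer-valued function $W_f-c$, so that the constant $c$ only affects $d_\emptyset$. One must also check that the integer expansion reduces mod $2$ to the actual ANF of $g\bmod 2$. Both points hold because multilinear monomials $a^I$ form a basis of integer-valued functions on $\{0,1\}^m$, and reduction mod $2$ commutes with evaluation. The hypothesis $r\ge2$ is not needed by this argument. The hypothesis $r\le4$ is used only for the top monomial, and in fact any $r\le 8$ would suffice there.
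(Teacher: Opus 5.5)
Your proposal is correct and follows the paper's proof: you use the same product expansion $\prod_i(1-2a_ix_i)$ and the same observation that the coefficient of $a^I$ is divisible by $2^{|I|+1}$ for $\varnothing\ne I\subsetneq[9]$, and your M\"obius inversion is what justifies the paper's claim that pointwise divisibility of $W_f-c$ by $2^r$ passes to its integer multilinear coefficients. Your explicit check of the top monomial $I=[9]$, where the valuation is $9-r\ge5$, spells out a case the paper leaves implicit in its appeal to $m=9$ and $r\le4$.
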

\begin{proof}
Clearly,
\[
W_f(a)=\sum_x(-1)^{f(x)+a\cdot x}=\sum_x(-1)^{f(x)}\prod_{i=1}^m(1-2a_ix_i).
\]
Expanding $\prod_i(1-2a_ix_i)$ proves \eqref{eq:integer-expansion}. For nonempty $I\subset[9]$, the inner sum of \eqref{eq:integer-expansion} is even, so the corresponding coefficient is divisible by $2^{|I|+1}$.
If $W_f-c$ is pointwise divisible by $2^r$, then all its integer multilinear coefficients are divisible by $2^r$. By \eqref{eq:integer-expansion}, for $m=9$ and $r\le 4$, after division, every coefficient of degree at least $r$ is even, and the result follows.
\end{proof}

\begin{lemma}\label{lem:divisibility}
If $f\in \mathbb{B}_m$ with $\deg(f)\le 3$, then
\begin{equation}\label{eq:cubic-div}
2^{\lceil m/3\rceil}\mid W_f(a)\quad(a\in \mathbb{F}_2^m).
\end{equation}
For a cubic $g\in \mathbb{B}_9$:
\begin{enumerate}
\item $8\mid W_g$, and $W_g\bmod16$ is constant;
\item $W_g\bmod32$ has integer multilinear degree at most 1, with nonconstant coefficients divisible by 16;
\item $W_g\bmod64$ has integer multilinear degree at most 3.
\end{enumerate}
For a cubic on 8 variables, its Walsh transform is divisible by 8 and is constant modulo 16.
\end{lemma}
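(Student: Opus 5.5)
The plan is to combine the classical divisibility theorem for Reed--Muller weights with the integer expansion \eqref{eq:integer-expansion} of Lemma~\ref{lem:residue}. Every part of the statement then comes from bounding the $2$-adic valuation of each coefficient in that expansion.

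\emph{First claim.} I will use McEliece's theorem (Ax's theorem): every codeword of RM$(r,k)$ has weight divisible by $2^{\lceil k/r\rceil-1}$. For any $a$, the function $f+a\cdot x$ has degree at most $3$. Hence
\[
W_f(a)=2^m-2\,wt(f+a\cdot x)
\]
is divisible by $2^{\lceil m/3\rceil}$, since also $2^{\lceil m/3\rceil}\mid 2^m$. The same argument applies to any cubic on $k$ variables. In particular,
\[
S=\sum_{x\in\mathbb{F}_2^k}(-1)^{h(x)}
\]
is divisible by $2^{\lceil k/3\rceil}$ for every cubic $h$ on $k$ variables, including the trivial cases $k\le 3$, where this only says $S$ is even for $k\ge1$.

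\emph{Items 1--3 for $m=9$.} In \eqref{eq:integer-expansion}, the inner sum $S_I$ runs over the subcube $\{x_i=1,\ i\in I\}$, which has dimension $9-|I|$. The restriction of $g$ to that subcube is a cubic in $9-|I|$ variables. So $S_I$ is divisible by $2^{\lceil (9-|I|)/3\rceil}$, and the coefficient of $a^I$ is divisible by $2^{|I|+\lceil(9-|I|)/3\rceil}$. The exponents are:
\begin{itemize}
\item $3$ for $|I|=0$;
\item $4$ for $|I|=1$;
\item $5$ for $|I|=2,3$;
\item $6$ for $|I|=4$;
\item at least $7$ for $|I|\ge5$.
\end{itemize}
Reading this off gives each item:
\begin{itemize}
\item modulo $16$, only the constant term survives, and it is divisible by $8$ (item 1);
\item modulo $32$, only the constant and linear terms survive, and the linear coefficients are divisible by $16$ (item 2);
\item modulo $64$, only the terms with $|I|\le3$ survive (item 3).
\end{itemize}
Since $a\in\{0,1\}^9$, the integer multilinear expression evaluated pointwise and then reduced gives exactly these statements about $W_g \bmod 2^r$.

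\emph{The $8$-variable case.} The same computation applies with exponents $|I|+\lceil(8-|I|)/3\rceil$. These are $3$ for $|I|=0$, $4$ for $|I|=1$, $4$ for $|I|=2$, and at least $4$ beyond. Hence $8\mid W$, and $W$ is constant modulo $16$.

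The only nontrivial input is McEliece's divisibility theorem, which I would cite rather than reprove. The main point of care is the bookkeeping of ceilings for each $|I|$, especially the small subcubes. There the bound must still hold, and it does, because sums of $\pm1$ over $2^k$ points are even for $k\ge1$.
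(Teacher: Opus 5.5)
Your proof is correct. Its second half is exactly the paper's argument: you apply the divisibility bound to each subcube sum $S_I$ in \eqref{eq:integer-expansion}, obtain the exponent $|I|+\lceil(m-|I|)/3\rceil$, and read off the tables for $m=9$ and $m=8$.

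The difference is in the first claim. You import McEliece's divisibility theorem for RM$(3,k)$ and convert it through $W=2^k-2\,wt$. The paper instead proves the needed divisibility directly:
\begin{itemize}
\item Up to sign, it writes $(-1)^{f(x)+a\cdot x}$ as a product of factors $1-2M(x)$ over monomials $M$ of degree at most $3$.
\item A term choosing $j$ factors that together involve $e$ variables contributes a multiple of $2^{j+m-e}$.
\item Since $e\le\min(3j,m)$, this gives $j+m-e\ge\max(j,m-2j)\ge\lceil m/3\rceil$.
\end{itemize}
This is in effect the elementary binary proof of McEliece's theorem.

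Your route is shorter, given the citation. The paper's route is self-contained, and its term-by-term accounting is reused later. In Lemma~\ref{lem:JC}, the paper identifies the homogeneous cubic part of the residue function from this same expansion. Modulo $8$, the only surviving selections are two disjoint cubic monomials covering the six free variables; modulo $16$, they are three disjoint cubic monomials covering all nine variables. This is what produces the operators $J$ and $C$. A black-box divisibility theorem gives the bound, but not this finer information about which terms attain it.
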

\begin{proof}
Apart from a constant sign, $(-1)^{f(x)+a\cdot x}$ is a product of factors $1-2M(x)$, where each $M$ is a monomial of degree at most 3. A term selecting $j$ factors and involving $e$ distinct variables contributes to the sum over $x$ a multiple of $2^{j+m-e}$. Since $e\le\min(3j,m)$,
\[
j+m-e\ge\max(j,m-2j)\ge\lceil m/3\rceil.
\]
This proves \eqref{eq:cubic-div}, including restrictions of $f$.

Apply this fact to each restricted function in \eqref{eq:integer-expansion}. A coefficient indexed by $I$ is divisible by
\begin{equation}\label{eq:coefficient-div}
2^{\,|I|+\lceil(m-|I|)/3\rceil}.
\end{equation}
For $m=9$, the exponent is at least 4 when $|I|\ge1$, at least 5 when $|I|\ge2$, and at least 6 when $|I|\ge4$. For $m=8$, it is at least 3 at $I=\varnothing$ and at least 4 when $|I|\ge1$.
\end{proof}

\begin{proposition}\label{lem:forced}
Let $f\in \mathbb{B}_9$ be with ${\Vert W_f\Vert }_{\infty}=24$. Then its spectrum satisfies
\begin{equation}\label{eq:forced}
W_f\equiv8\pmod{16},\qquad W_f(a)\in\{\pm8,\pm24\},
\end{equation}
with 64 coefficients of magnitude 8 and 448 of magnitude 24. Moreover,
\begin{equation}\label{eq:def-g}
g=(W_f-8)/16\bmod2
\end{equation}
is cubic, and
\begin{equation}\label{eq:correlation}
((-1)^f, W_g)=(W_f, (-1)^g)=-10240.
\end{equation}
\end{proposition}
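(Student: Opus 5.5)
The plan is to pin down the residue of $W_f$ step by step, applying Lemma~\ref{lem:residue} with $r=2,3,4$ in turn. At each step Parseval, $\sum_a W_f(a)^2=2^{18}=512\cdot 512$, combined with the bound $|W_f|\le 24$, excludes every residue pattern but one.

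\textbf{Step 1: $W_f\equiv 0 \pmod 4$.} First I show that $W_f$ is constant modulo $4$. For $a\neq 0$,
\[
W_f(a)-W_f(0)=-2\sum_{a\cdot x=1}(-1)^{f(x)},
\]
and the inner sum has $256$ terms $\pm1$, so it is even. Hence $W_f\equiv c\pmod 4$ with $c$ even. If $c\equiv 2$, then every $|W_f(a)|$ is $\equiv 2 \pmod 4$, so $|W_f(a)|\le 22$. This contradicts $\Vert W_f\Vert_\infty=24$. So $c\equiv 0\pmod 4$.

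\textbf{Step 2: $W_f\equiv 0 \pmod 8$.} Lemma~\ref{lem:residue} with $r=2$, $c=0$ says $W_f/4 \bmod 2$ is an affine function $\ell$.
\begin{itemize}
\item Where $\ell=1$, $|W_f|\equiv 4\pmod 8$, so $W_f^2\le 400$.
\item Where $\ell=0$, $W_f^2\le 576$.
\end{itemize}
If $\ell\neq 0$, then $\ell$ equals $1$ on at least $256$ points. This gives
\[
\sum_a W_f(a)^2\le 256\cdot 400+256\cdot 576=249856<262144,
\]
contradicting Parseval. Hence $\ell=0$, and $W_f\in\{0,\pm8,\pm16,\pm24\}$.

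\textbf{Step 3: $W_f\equiv 8 \pmod{16}$.} Lemma~\ref{lem:residue} with $r=3$ says $W_f/8\bmod 2$ is a quadratic $q$ of weight $N$.
\begin{itemize}
\item Where $q=1$, $W_f^2\in\{64,576\}$.
\item Where $q=0$, $W_f^2\in\{0,256\}$.
\end{itemize}
Parseval then gives $576N+256(512-N)\ge 262144$, that is, $N\ge 409.6$. A nonconstant quadratic on $9$ variables has weight at most $2^8+2^7=384$ (standard weight distribution of quadratics; alternatively, apply Lemma~\ref{lem:minimum-weight} to $q+1$ when $\deg q=1$). Therefore $q\equiv 1$, and $W_f\equiv 8\pmod{16}$ with $W_f\in\{\pm8,\pm24\}$.

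\textbf{Step 4: the counts.} Write $n_8,n_{24}$ for the number of coefficients of each magnitude. Then $n_8+n_{24}=512$ and $64n_8+576n_{24}=262144$. These give $n_{24}=448$ and $n_8=64$.

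\textbf{Step 5: $g$ is cubic.} Now $W_f\equiv 8\pmod{16}$, so Lemma~\ref{lem:residue} applies with $r=4$, $c=8$. It shows that $g=(W_f-8)/16\bmod 2$ has degree at most $3$.

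\textbf{Step 6: the correlation.} Write $W_f=8+16k$. The four possible values give:
\begin{itemize}
\item $W_f=8$: $k=0$, so $g=0$;
\item $W_f=-8$: $k=-1$, so $g=1$;
\item $W_f=24$: $k=1$, so $g=1$;
\item $W_f=-24$: $k=-2$, so $g=0$.
\end{itemize}
Thus $(-1)^gW_f=8$ whenever $|W_f|=8$, and $(-1)^gW_f=-24$ whenever $|W_f|=24$. Summing,
\[
(W_f,(-1)^g)=64\cdot 8-448\cdot 24=-10240.
\]
Since $(H_9 f_1,f_2)=(f_1,H_9 f_2)$, we also get $((-1)^f,W_g)=(W_f,(-1)^g)$.

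The only step that uses an outside fact is the weight bound for nonconstant quadratics in Step 3, and I expect that to be the main obstacle. It follows from the Dickson normal form, under which the weights of nonconstant quadratics on $9$ variables are $256\pm 2^{8-h}$ or $256$. The other steps are direct arithmetic.
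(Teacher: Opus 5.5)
Your proof is correct and follows the paper's argument essentially step for step; the paper simply asserts $4\mid W_f$ as clear, and your Step 1 supplies the justification. The point you flagged as the main obstacle is not one: Lemma~\ref{lem:minimum-weight} with $d=2$, applied to $1+q$, gives $wt(1+q)\ge128$ and hence $wt(q)\le384$ for every quadratic $q\ne1$ (not only when $\deg q=1$), which is exactly how the paper closes Step 3 without the Dickson normal form.
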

\begin{proof}
Clearly, $4\mid W_f$. By Lemma~\ref{lem:residue}, $W_f/4\bmod2$ is affine. If it is constantly one, every magnitude is at most 20. If it is balanced, the mean square is at most $(20^2+24^2)/2=488<512$. Hence it is zero, and $8\mid W_f$.

Now $q=W_f/8\bmod2$ is quadratic. If $q\ne1$, Lemma~\ref{lem:minimum-weight} applied to $1+q$ gives at least 128 zeros of $q$. On them $|W_f|\le16$, so
\[
\sum_a W_f(a)^2\le128\cdot 16^2+384\cdot24^2=253952<2^{18}.
\]
Thus $q=1$, proving \eqref{eq:forced}. Parseval fixes the two multiplicities. Lemma~\ref{lem:residue} makes \eqref{eq:def-g} cubic. Pointwise, $(-1)^gW_f=8$ on the 64 small positions and $(-1)^gW_f=-24$ elsewhere. Their sum is $-10240$, and the result follows.
\end{proof}

By Propostion~\ref{lem:forced}, if $f\in \mathbb{B}_9$ and ${\Vert W_f\Vert }_{\infty}=24$, then $g\in \mathbb{B}_9$ is cubic, and ${\Vert W_g\Vert }_{1}=10240$. The next result controls both the size of a cubic Walsh transform in $\ell^1$ and the form of its equality cases.

\begin{theorem}\label{thm:l1}
If $g\in \mathbb{B}_9$ is cubic, then
\begin{equation}\label{eq:l1}
{\Vert W_g\Vert }_{1}\le10240.
\end{equation}
Equality holds only in one of the following two cases:
\begin{enumerate}
\item[(A)] $W_g(x)\in\{\pm16,\pm48\}$ for every $x$;
\item[(B)] on an affine hyperplane $E$, $W_g(x)=\pm16$; off $E$, exactly 192 entries are $\pm32$ and the other 64 are zero.
\end{enumerate}
\end{theorem}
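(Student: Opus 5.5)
The plan is to combine the congruence information of Lemma~\ref{lem:divisibility} with Parseval, $\sum_x W_g(x)^2=2^{18}$ over $512$ points, through pointwise quadratic majorants of $|w|$ chosen according to the residue class of $W_g$. By Lemma~\ref{lem:divisibility}, $8\mid W_g$ and $W_g\equiv c\pmod{16}$ with $c\in\{0,8\}$. Moreover $W_g\bmod 32$ has the form $c'+16L(x)$ with $L$ affine. So, modulo $32$, either $W_g$ is constant, or it takes one residue on an affine hyperplane $E$ and the residue shifted by $16$ off $E$. The proof splits along these possibilities.

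\emph{Case $c=0$.} Every value is a multiple of $16$.
\begin{itemize}
\item If $W_g\equiv16\pmod{32}$ everywhere, use $|w|\le (w^2+768)/64$, whose defect is $(|w|-16)(|w|-48)/64\ge0$ for $|w|\in 16+32\mathbb{Z}$. Summing and applying Parseval gives $\Vert W_g\Vert_1\le(2^{18}+512\cdot768)/64=10240$, with equality iff every $|W_g(x)|\in\{16,48\}$. This is case (A).
\item If $W_g\equiv0\pmod{32}$ everywhere, use $|w|\le w^2/32$ (equality at $0,\pm32$). This gives $\Vert W_g\Vert_1\le 8192$.
\item In the mixed case, $W_g\equiv16\pmod{32}$ on $E$ and $W_g\equiv0\pmod{32}$ off $E$. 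Put $S_E=\sum_E W_g^2$ and $S_{\bar E}=\sum_{\bar E}W_g^2$. The two majorants give
\[
\Vert W_g\Vert_1\le \frac{S_E+256\cdot768}{64}+\frac{S_{\bar E}}{32}=3072+\frac{2^{18}}{64}+\frac{S_{\bar E}}{64}.
\]
Since $|W_g|\ge16$ on $E$, we have $S_E\ge 256\cdot256$, so $S_{\bar E}\le 196608$. Hence $\Vert W_g\Vert_1\le 3072+4096+3072=10240$.
\end{itemize}
In the mixed case, equality forces $|W_g|=16$ on $E$ and $W_g\in\{0,\pm32\}$ off $E$ with $\sum_{\bar E}W_g^2=192\cdot1024$. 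That is exactly 192 entries $\pm32$ and 64 zeros, which is case (B).

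\emph{Case $c=8$.} Here every $|W_g(x)|\in 8+16\mathbb{Z}$, and the majorant $|w|\le(w^2+192)/32$ only yields $11264$. This case is the main obstacle. The numbers show why: $11264$ is attained exactly by the spectrum of Proposition~\ref{lem:forced} ($64$ values $\pm8$ and $448$ values $\pm24$). So the argument must show that a \emph{cubic} cannot have a spectrum that is, or is close to, this extremal one.

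I would use the mod-$64$ information: $W_g\bmod 64$ has multilinear degree at most $3$. The values $8,-24,40,-56$ are pairwise distinct modulo $64$. Thus the sets $\{x:W_g(x)\equiv 8\}$, $\{x:W_g(x)\equiv -24\}$, etc., modulo $64$ are cut out by Boolean functions of degree at most $3$ in $x$. This holds after combining with the affine mod-$32$ residue $L$, via Lemma~\ref{lem:residue}. Each such set is therefore empty, everything, or, by Lemma~\ref{lem:minimum-weight}, has size at least $64$ (and so does its complement).

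I would then run the same weighted-majorant bookkeeping as above on each such piece. The aim is to show that whenever the $\pm8$-type positions number at most $64$, enough values of magnitude $\ge40$ are forced to push the sum at least $1024$ below $11264$. In the complementary configurations, the $|w|\le w^2/8$-type majorant for the $\pm8$ positions already gives the deficit.

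If this parity bookkeeping is not strong enough, the fallback is to note that $c=8$ forces $W_g(0)\equiv 8\pmod{16}$, i.e.\ $wt(g)\equiv 4\pmod 8$. One would then restrict $g$ to two complementary affine hyperplanes. Each restriction is an $8$-variable cubic whose Walsh values are $\equiv0\pmod 8$ and constant modulo $16$ (the last part of Lemma~\ref{lem:divisibility}). Writing $W_g(x)$ and $W_g(x+e)$ as the sum and difference of the two $8$-variable transforms, the pair $\{|W_g(x)|,|W_g(x+e)|\}$ is determined by those transforms, and summing over pairs would give a sharper bound.
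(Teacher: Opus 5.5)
\emph{Verdict: there is a genuine gap.} The case $W_g\equiv 8\pmod{16}$ is left as a plan, and neither route you sketch can close it.

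Your treatment of the case $16\mid W_g$ is correct and is the paper's argument rescaled by $w=16k$. Your majorants $|w|\le (w^2+768)/64$ and $|w|\le w^2/32$ are the paper's $|j|\le (j^2+3)/4$ and $|k|\le k^2/2$, and your mixed-parity bookkeeping with $S_E\ge 256\cdot 256$ is the paper's $704-A/4\le 640$ with $A\ge 256$. Your equality analysis correctly yields (A) and (B).

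Both of your suggested routes for $W_g\equiv 8\pmod{16}$ rely only on Parseval plus congruence and degree information. That information does not rule out the extremal configuration. Take $W_g=8$ on an affine $6$-flat (64 points) and $W_g=-24$ on the other 448 points. Then:
\begin{itemize}
\item $W_g\equiv 8\pmod{32}$ holds everywhere;
\item $(W_g-8)/32\bmod 2$ is the complement of the indicator of a $6$-flat, which is cubic;
\item $\sum W_g^2=2^{18}$ exactly;
\item yet $\Vert W_g\Vert_1=11264$.
\end{itemize}
So the mod-$64$ analysis does not force any values of magnitude $\ge 40$, even when there are exactly 64 positions of magnitude $8$.

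The hyperplane fallback fails in the same way. Take restricted transforms with $|A|=8$ on 160 points and $|A|=24$ on 96 points, $B=0$ where $|A|=24$, $|B|=32$ on 32 of the other points, and $|B|=16$ on the remaining 128. These meet both 8-variable Parseval identities and all the residue conditions you list. They give $\Vert W_g\Vert_1=2\sum\max(|A|,|B|)=10752>10240$.

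The missing ingredient is the linear constraint from Walsh inversion. By item 2 of Lemma~\ref{lem:divisibility}, $W_g/8\equiv 1+2\ell\pmod 4$ with $\ell(x)=u\cdot x+\epsilon$ affine. Hence the sign-normalized values $z_x=(-1)^{\ell(x)}W_g(x)/8$ all lie in $1+4\mathbb{Z}$. Inversion at $u$ gives $\sum_x z_x=\pm 512(-1)^{g(u)}/8=\pm 64$, alongside $\sum_x z_x^2=4096$.

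Use the asymmetric quadratic majorant $|z|\le (3z^2-2z+15)/16$ on $1+4\mathbb{Z}$, which is exact at $z\in\{1,5,-3\}$. This yields $\sum_x|z_x|\le (3\cdot 4096+128+15\cdot 512)/16=1256$, that is, $\Vert W_g\Vert_1\le 10048<10240$. The inequality is strict, so this case contributes no equality, which you also need for the equality characterization. The extremal configuration above has $\sum_x z_x=-1280\neq\pm 64$, and this is exactly what excludes it.
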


\begin{proof}
By Lemma~\ref{lem:divisibility}, $8\mid W_g$ and $W_g\bmod16$ is constant. Suppose first that $W_g\equiv8\pmod{16}$. By Lemma~\ref{lem:divisibility}, $W_g/8\equiv1+2\ell\pmod4$ for an affine Boolean function $\ell$. Then we have
\begin{equation}\label{eq:z}
z_x=(-1)^{\ell(x)}W_g(x)/8\equiv1\pmod4.
\end{equation}
Walsh inversion and Parseval imply
\begin{equation}\label{eq:z-moments}
\sum_xz_x=\pm64,\qquad \sum_xz_x^2=4096.
\end{equation}
For integers $z\equiv1\pmod4$, define
\begin{equation}\label{eq:majorant}
\Phi(z)=\frac{3z^2-2z+15}{16}.
\end{equation}
Then $|z|\le\Phi(z)$. Indeed, for positive $z$ the difference has numerator $3(z-1)(z-5)$; for negative $z$ its numerator is $(3z+5)(z+3)$. Both are nonnegative on the specified progression. Therefore,
\begin{equation}\label{eq:sum-majorant}
\sum_x\Phi(z_x)\le
\frac{3\cdot4096+128+15\cdot512}{16}=1256,
\end{equation}
and ${\Vert W_g\Vert }_{1}=8\sum_x |z_x|\le10048<10240$.

It remains that $16\mid W_g$. Put $k=W_g/16$. Then by the Parseval formulae and Lemma \ref{lem:divisibility},
\begin{equation}\label{eq:k-properties}
\sum_xk_x^2=1024,\qquad k\bmod2\text{ is affine}.
\end{equation}
For even integers $k$ and odd integers $j$, we have
\begin{equation}\label{eq:parity-majorants}
|k|\le k^2/2,\qquad |j|\le(j^2+3)/4.
\end{equation}
If the parity is constantly even, these inequalities give ${\Vert W_g\Vert }_{1}\le16\cdot1024/2=8192$. If it is constantly odd, we have $\sum|k_x|\le(1024+3\cdot512)/4=640$, with equality only at $k_x\in\{\pm1,\pm3\}$, which is case (A).

If the parity is nonconstant, then exactly 256 entries are odd. Let $A$ be the squared sum on those entries. Then $A\ge256$, and by (\ref{eq:parity-majorants}),
\begin{equation}\label{eq:balanced-l1}
\sum_x|k_x|\le\frac{A+3\cdot256}{4}+\frac{1024-A}{2}
=704-\frac A4\le640.
\end{equation}
Equality requires all odd entries to be $\pm1$ and all even entries to be in $\{0,\pm2\}$. Equation~\eqref{eq:k-properties} gives 192 nonzero even entries and 64 zeros, and the result follows. 
\end{proof}

\section{Excluding nonlinearity 244}
Throughout this section, we assume that $f$ has Walsh transform satisfying ${\Vert W_f\Vert }_{\infty}=24$.
Theorem~\ref{thm:l1} and \eqref{eq:correlation} force equality in \eqref{eq:l1}. In addition,
\begin{equation}\label{eq:negative-sign}
(-1)^{f(x)}=-\text{sgn} (W_g(x))\quad\text{whenever }W_g(x)\ne0.
\end{equation}

\subsection{Eliminating case (B) of Theorem~\ref{thm:l1}}
We eliminate case (B) of Theorem~\ref{thm:l1} first.

\begin{lemma}\label{lem:hyperplane}
The function $f$ has no restriction of degree at most three to an affine hyperplane of $V$.
\end{lemma}
\begin{proof}
An affine change of variables preserves the set of absolute Walsh values, so let the hyperplane be a coordinate slice. Let $A$ and $B$ be the 8-variable Walsh transforms of its two restrictions. For each frequency,
\begin{equation}\label{eq:restrictions}
A+B,\ A-B\in\{\pm8,\pm24\}, \ \ \ \  |A|+|B|\le 24.
\end{equation}
If the first restriction is cubic, by Lemma~\ref{lem:divisibility}, $A\equiv0$ or $8\pmod{16}$.

In the first case, \eqref{eq:restrictions} gives $A\in\{0,\pm16\}$. Parseval forces $|A|=16$ at every frequency, then by \eqref{eq:restrictions}, $|B|=8$ everywhere, impossible.

In the second case, $A\in\{\pm8,\pm24\}$. Parseval gives 96 entries of magnitude 24 and 160 of magnitude eight. Equation~\eqref{eq:restrictions} forces $B=0$ on the former and $|B|\le16$ on the latter. Thus $\sum B^2\le160\cdot16^2=40960<256^2$, again impossible.
\end{proof}

\begin{proposition}\label{prop:no-B}
Case (B) of Theorem~\ref{thm:l1} cannot arise from $f$.
\end{proposition}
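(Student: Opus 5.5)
The plan is to show that in case (B) the restriction of $f$ to the hyperplane $E$ is cubic, which contradicts Lemma~\ref{lem:hyperplane}. Equation~\eqref{eq:negative-sign} determines $f$ on $E$ completely, because $W_g(x)=\pm16\neq0$ there. Precisely, $f(x)=1$ exactly when $W_g(x)=16$, and $f(x)=0$ when $W_g(x)=-16$. So it suffices to show that the Boolean function $s$ with $s(x)=1$ iff $W_g(x)\equiv 48\pmod{64}$ coincides on $E$ with a cubic Boolean function on $\mathbb{F}_2^9$. Since $16\equiv16$ and $-16\equiv48 \pmod{64}$, we have $f|_E=(1+s)|_E$. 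A restriction of a degree $\le3$ function to an affine hyperplane again has degree $\le3$.

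To control $s$, I use the three parts of Lemma~\ref{lem:divisibility}. In case (B) all values $0,\pm16,\pm32$ are divisible by $16$. Part~3 gives an integer multilinear polynomial $P$ of degree $\le3$ with $P\equiv W_g\pmod{64}$ pointwise. Part~2 says that modulo $32$ the polynomial has degree $\le1$ and its nonconstant linear coefficients are divisible by $16$. Together with $16\mid W_g$, this lets us arrange (adjusting $P$ by multiples of $64$ if needed) that
\[
P=16\,L+32\,Q,
\]
where $L=c_0+\sum_i c_ix_i$ is an integer affine form and $Q$ is an integer multilinear polynomial of degree $\le3$. The step needing the most care is the bookkeeping here. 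The coefficients of degree $2$ and $3$ in $P$ must be divisible by $32$; this follows from part~2 after reducing the integer expansion mod $32$. The constant must be absorbed correctly into $L$.

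Then $W_g/16\equiv L+2Q\pmod 4$ pointwise, so $s(x)$ is bit~1 of $L(x)+2Q(x)$, namely
\[
s=\Bigl\lfloor \tfrac{L}{2}\Bigr\rfloor \bmod 2 \;+\; Q\bmod 2 .
\]
Here $Q\bmod2$ is a Boolean function of degree $\le3$, since reducing integer multilinear coefficients mod $2$ gives the $\mathbb{F}_2$ polynomial. For $\lfloor L/2\rfloor\bmod 2$, the second binary digit of an integer affine combination $\sum c_ix_i$ of $0/1$ variables is a quadratic Boolean function. This follows from the identity $\binom{\sum_{i\in S}x_i}{2}\equiv\sum_{i<j\in S}x_ix_j \pmod 2$ together with the contributions of the coefficients $c_i\equiv2,3\pmod4$, which are linear or quadratic. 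Hence $\deg s\le3$ on all of $\mathbb{F}_2^9$.

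Therefore $f|_E=(1+s)|_E$ has degree at most three, which contradicts Lemma~\ref{lem:hyperplane}, so case (B) cannot arise. I expect the main obstacle to be making the passage from ``$W_g\bmod64$ has integer degree $\le3$, and $\bmod32$ degree $\le1$ with linear coefficients divisible by $16$'' to the decomposition $P=16L+32Q$ fully rigorous. In particular, the Boolean degree of bit~1 of $P/16$ must be read off correctly, carries included.
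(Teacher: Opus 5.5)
Your proof is correct and is essentially the paper's argument. Your bit-1 function $\lfloor k/2\rfloor \bmod 2$ (with $k=W_g/16$) is exactly the paper's $r=(k-q)/2\bmod 2$. Your splitting $P=16L+32Q$, together with the $\binom{N}{2}$ carry identity, spells out the paper's remark that an affine Boolean function has integer degree at most two modulo 4, used before it divides $k-q$ by two.

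One small slip: the indicator of $W_g\equiv 48\pmod{64}$ agrees with bit~1 of $W_g/16$ only on $E$. Off $E$, the entries $\pm32$ have bit~1 set. So define $s$ as the bit-1 function from the start; since you only use $s$ on $E$, the conclusion is unaffected.
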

\begin{proof}
Let $k=W_g/16$ and let $q=k\bmod2$ be its affine Boolean parity, so the affine hyperplane $E=\{q=1\}$. The Boolean function
\begin{equation}\label{eq:rr}
r=(k-q)/2\bmod2
\end{equation}
has degree at most three. Indeed, Lemma~\ref{lem:divisibility} ( the last assertion) gives an integer expansion of $k\bmod4$ of degree at most three. An affine Boolean function has integer degree at most two modulo 4, as follows from expanding its parity expression. Finite differences permit coefficientwise division of $k-q$ by two, proving the claim.

On $E$, $k=\pm1$, hence by \eqref{eq:rr}, $k=(-1)^r$. Equation~\eqref{eq:negative-sign} now gives $f=1+r$ on $E$, contradictory to Lemma~\ref{lem:hyperplane}.
\end{proof}

\subsection{Eliminating case (A) of Theorem~\ref{thm:l1}}

In case (A), define
\begin{equation}\label{eq:def-h}
h=(W_g/16-1)/2\bmod2.
\end{equation}
By Lemma~\ref{lem:divisibility}, $h$ is cubic. For $k=W_g/16\in\{\pm1,\pm3\}$, direct substitution in \eqref{eq:negative-sign} gives $k=(-1)^{h+1}-2(-1)^f$. Applying the Fourier-Hadamard transform $H_9$ yields
\begin{equation}\label{eq:second-transform}
W_h=-32(-1)^g-2W_f.
\end{equation}
The right side equals $-48(-1)^g$ at the 64 positions with $|W_f|=8$, and $16(-1)^g$ at the other positions. Therefore,
\begin{equation}\label{eq:residue-involution}
(W_h/16-1)/2\bmod2=g.
\end{equation}
Thus the residue operation in \eqref{eq:def-h} takes $g$ to $h$ and then returns to $g$.

For a homogeneous cubic $P=\sum_{|I|=3}p_I X^I$, define another homogeneous cubic $J(P)$ by (the coefficient of $X^I$ in $J(P)$)
\begin{equation}\label{eq:J}
[X^I]J(P)=\sum_{\substack{\{A,B\}:\ A\cup B=[9]\setminus I\\|A|=|B|=3}}p_Ap_B,
\end{equation}
and define
\begin{equation}\label{eq:C}
C(P)=\sum_{\substack{\{A,B,D\}:\ A\cup B\cup D=[9]\\|A|=|B|=|D|=3}}p_Ap_Bp_D.
\end{equation}
Both sums are over unordered partitions and take values in $\mathbb{F}_2$.

\begin{lemma}\label{lem:JC}
If $P$ is the homogeneous cubic part of the function $g$ obtained above, then
\begin{equation}\label{eq:JC}
J(J(P))=P,\qquad C(P)=0.
\end{equation}
\end{lemma}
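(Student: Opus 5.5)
The plan is to compute the homogeneous cubic part of a residue such as \eqref{eq:def-h} directly from the integer expansion \eqref{eq:integer-expansion}, and to show that it equals $J$ applied to the cubic part of the input. The identity \eqref{eq:residue-involution} then gives $J(J(P))=P$. The relation $C(P)=0$ will come from the same bookkeeping applied to $W_g(0)$, using the fact that in case (A) we have $16\mid W_g$.

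\emph{Step 1 (cubic part of $h$).} As in the proof of Lemma~\ref{lem:divisibility}, write $(-1)^{g(x)}=\pm\prod_M(1-2M(x))$, the product running over the monomials $M$ of $g$. The ANF of the Boolean function $h$ is simply the reduction mod $2$ of the integer multilinear coefficients of $(W_g/16-1)/2$, since $a^I$ takes values in $\{0,1\}$. So for $|I|=3$ the coefficient of $a^I$ in $h$ is $-8S_I/32\equiv S_I/4\pmod 2$, where $S_I=\sum_{x:\,x_i=1\,(i\in I)}(-1)^{g(x)}$. Now $S_I$ is the value at $0$ of the Walsh transform of a cubic on the six free coordinates in $[9]\setminus I$. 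Expanding the product, a term with $j$ factors covering $e$ of these six variables contributes a multiple of $2^{j+6-e}$. This exponent equals $2$ only when $j=2$ and $e=6$. The two factors must then be two cubic monomials $X^A,X^B$ of $g|_{x_I=1}$ with $A\sqcup B=[9]\setminus I$, and these are exactly coefficients $p_A,p_B$ of $P$, since lower-degree monomials cannot cover six variables with two factors. Every other term is divisible by $8$; for instance $(j,e)=(2,5)$ and $(3,6)$ give $2^3$. Hence $S_I/4\equiv\#\{\{A,B\}:p_Ap_B=1\}\equiv[X^I]J(P)\pmod 2$, so the cubic part of $h$ is $J(P)$. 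I will also check that the overall sign and the constant term do not affect this parity.

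\emph{Step 2 (involution).} The same computation applies verbatim to $h$, which is cubic, is on $9$ variables, and satisfies \eqref{eq:second-transform}, which again places it in case (A). By \eqref{eq:residue-involution} the residue of $h$ is $g$. The cubic part of $g$ is therefore $J(\text{cubic part of }h)=J(J(P))$, and this must equal $P$.

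\emph{Step 3 ($C(P)=0$).} Apply the expansion to $W_g(0)=\sum_x(-1)^{g(x)}$ with $m=9$. A term with $j$ factors covering $e$ variables is divisible by $2^{j+9-e}$, and $e\le 3j$. The exponent is $\le3$ only when $j=3$ and $e=9$, where it equals $3$. This forces three cubic monomials $X^A,X^B,X^D$ partitioning $[9]$, and such a term contributes $\pm8$. Terms with $j=4$ and $e=9$, or $j=2$ and $e=6$, are already divisible by $16$. Hence $W_g(0)\equiv8\,C(P)\pmod{16}$. In case (A) we have $16\mid W_g$, so $C(P)=0$.

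The main obstacle is the careful accounting in Step 1. I need to confirm that no mixed term, such as a cubic times quadratic monomials or three factors with overlapping supports, contributes to the $2^2$ level of $S_I$. I also need to confirm that the passage from integer coefficients of $(W_g-16)/32$ to the Boolean ANF does not introduce carries into degree-$3$ coefficients from lower-degree terms. This is the same coefficientwise-division issue handled in Proposition~\ref{prop:no-B}, and I would resolve it by working with \eqref{eq:coefficient-div} monomial by monomial.
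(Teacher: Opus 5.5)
Your proposal is correct and follows essentially the same route as the paper. Both arguments read the cubic coefficient of $h$ as $S_I/4 \bmod 2$ and observe that modulo 8 the six-variable restricted sum sees only pairs of disjoint cubic monomials covering $[9]\setminus I$, which gives $J(P)$. Applying the same computation through \eqref{eq:residue-involution} yields $J(J(P))=P$, and the nine-variable sum modulo 16 sees only triples partitioning $[9]$, so $16\mid W_g$ gives $C(P)=0$.

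The obstacles you flag are already resolved by your own argument. The exponent bound $j+6-e\ge\max(j,6-2j)\ge2$, with equality only at $j=2$, $e=6$, rules out mixed terms. Carries cannot occur because the multilinear expansion over $\mathbb{Q}$ is unique, and Möbius inversion of the integer-valued function $(W_g-16)/32$ has integer coefficients, since the shift here is a constant.
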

\begin{proof}
For $|I|=3$, the coefficient of $a^I$ in the integer expansion of $W_g$ is $-8\sum_{x_i=1\ (i\in I)}(-1)^{g(x)}$. To obtain the cubic coefficient of $h$ in \eqref{eq:def-h}, divide by 32 and reduce modulo 2. This is, up to an irrelevant sign, the 6-variable restricted Walsh sum divided by 4, modulo 2.

In the product expansion used in Lemma~\ref{lem:divisibility}, the only contributions to that restricted sum modulo 8 select two disjoint cubic monomials covering the 6 free variables. Therefore the homogeneous cubic part of $h$ is exactly $J(P)$. Applying the same argument to \eqref{eq:residue-involution} proves $J(J(P))=P$.

For a 9-variable Walsh sum modulo 16, the only contributing selections consist of three disjoint cubic monomials covering all 9 variables. Thus, for every frequency $x$,
\begin{equation}\label{eq:C-residue}
W_g(x)/8\equiv C(P)\pmod2.
\end{equation}
Since $16\mid W_g$ in case (A), we have $C(P)=0$.
\end{proof}

The classification of $RM(3,9)/RM(2,9)$ under $GL(9,2)$ is completely known. So, using \eqref{eq:JC}, we can reduce the problem to a finite verification.
In fact, Brier and Langevin classify the 349 orbits of $RM(3,9)/RM(2,9)$
under $GL(9,2)$ \cite{BrierLangevin}. We use their published representative list \cite{ClassificationData}, with its original indexing, including the zero form at index 349.

 For $A\in GL(9,2)$, replace $g(a)$ by $g(Aa)$ and $f(x)$ by $f(A^{-T}x)$. Their transforms change according to
\begin{equation}\label{eq:linear-action}
W_f'(a)=W_f(Aa),\qquad W_g'(x)=W_g(A^{-T}x).
\end{equation}
All preceding spectral and residue identities remain valid. We first choose $A$ so that the cubic part is a listed representative, and then apply Lemma~\ref{lem:JC} to the transformed functions.

\begin{proposition}\label{prop:filter}
Among the 349 listed representatives, $J(J(P))=P$ holds precisely at indices
\[
282,\ 290,\ 320,\ 326,\ 329,\ 349.
\]
Imposing $C(P)=0$, only the representatives with indices $282,290,349$ survive.
\end{proposition}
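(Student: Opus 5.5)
This is a finite computational verification over the Brier--Langevin list \cite{ClassificationData}, and I would carry it out directly. For each of the 349 representatives I encode the homogeneous cubic part $P$ as its coefficient vector $(p_I)_{|I|=3}$ in $\mathbb{F}_2^{\binom{9}{3}}=\mathbb{F}_2^{84}$. I then evaluate $J(P)$ by formula \eqref{eq:J}. For each of the 84 triples $I$, the complement $[9]\setminus I$ has 6 elements. Its unordered splittings into two triples number $\binom{6}{3}/2=10$, so each coefficient of $J(P)$ is a sum of 10 products $p_Ap_B$. Applying $J$ again gives $J(J(P))$, and I compare it coefficientwise with $P$. Similarly, $C(P)$ in \eqref{eq:C} is a sum over the $280$ unordered partitions of $[9]$ into three triples.

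Before computing, I would check that both tests are well defined on orbits, or else argue that it does not matter. Proposition~\ref{prop:filter} is stated only for the listed representatives, and the way it is used is already justified before it: after a $GL(9,2)$ change of variables, the cubic part of $g$ \emph{is} a listed representative, and Lemma~\ref{lem:JC} applies to the transformed pair. Lower-degree terms of $g$ do not affect the homogeneous cubic part modulo $RM(2,9)$. So the proposition requires only the evaluation on the literal list, with its original indexing, and no equivariance argument is needed.

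The steps, in order, are as follows.
\begin{enumerate}
\item Parse the 349 representatives exactly as published, keeping only the degree-3 monomials (index 349 is the zero form).
\item Precompute the 84 triples, for each triple the 10 complementary pairings, and the 280 triple-partitions of $[9]$.
\item For each $P$, compute $J(P)$, then $J(J(P))$, and record the indices where $J(J(P))=P$. These should be $282,290,320,326,329,349$.
\item On those six, evaluate $C(P)$ and confirm that it vanishes exactly at $282,290,349$.
\end{enumerate}
The zero form trivially satisfies $J(0)=0$ and $C(0)=0$, which serves as a sanity check. I would also cross-check the implementation on hand examples. For instance, $P=x_1x_2x_3+x_4x_5x_6+x_7x_8x_9$ gives $C(P)=1$, since exactly one partition has all three parts present. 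For such a $P$ one can also compute $J(P)$ by hand and compare.

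The main obstacle is fidelity rather than mathematics. The representatives in \cite{ClassificationData} must be read with the correct variable labelling and indexing. Reproducibility should be ensured by listing the code (the \texttt{listings} package is loaded) and reporting, for the surviving indices, the explicit values of $J(P)$ and $C(P)$. A second independent implementation, for example one that computes $J$ via the restricted-Walsh-sum interpretation from the proof of Lemma~\ref{lem:JC}, i.e.\ the 6-variable sums divided by 4 modulo 2, would guard against an error in the combinatorial formula.
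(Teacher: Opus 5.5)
Your proposal is correct and is essentially the paper's own proof: a direct finite evaluation of $J$ (twice) and $C$ on the 349 Brier--Langevin representatives, comparing all 84 cubic coefficients, with $C$ then checked on the six survivors. Your remark that no equivariance argument is needed is also right, since the paper applies Lemma~\ref{lem:JC} only after a change of variables that makes the cubic part literally a listed representative.
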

\begin{proof}
This is the only one time finite calculation. Represent a monomial by the 9-bit mask of its support. For each representative, evaluate \eqref{eq:J} by switching the complementary mask for every unordered disjoint pair of present cubic monomials. Evaluate \eqref{eq:C} by counting modulo two the unordered partitions into three present monomials. Apply the first operation twice and compare all 84 cubic coefficients. Both supplied implementations perform these operations for all 349 entries and obtain the stated lists.
\end{proof}

The zero class 349 is also impossible. A quadratic function of rank $2s\le8$ has Walsh $\ell^1$ norm $512\cdot2^s\le8192$, whereas \eqref{eq:correlation} requires 10240. To verify the quadratic formula, reduce its quadratic part by a nonsingular linear substitution to $\sum_{i=1}^s x_{2i-1}x_{2i}$. Its Walsh transform is a product of $s$ two-variable transforms and $9-2s$ one-variable transforms. Adding affine terms only shifts or negates the spectrum.

With $ijk$ denoting $x_ix_jx_k$, the two remaining forms are
\begin{align}
P_{282}={}&125+145+345+346+137\notag\\
          &+248+349+569+789,\label{eq:P282}\\
P_{290}={}&125+235+245+345+136+127+347\notag\\
          &+128+238+148+129+349+569+789.\label{eq:P290}
\end{align}

We eliminate these two cases in Appendix A, and prove the following theorem.
\begin{theorem} \label{66}
Let $g$ be a 9-variable cubic Boolean function satisfying
$
W_g(a)\equiv16\pmod{32}
$
for every $a\in \mathbb{F}_2^9$, and suppose that
\[
T^2(g)=g,
\qquad
T(g)(a)=\frac{W_g(a)-16}{32}\pmod2.
\]
Then the cubic part of $g$ cannot belong to class $282$ or class $290$ if
$
\max_a|W_g(a)|\le48.
$
Indeed, in either class one necessarily has
$
{\max_a|W_g(a)|\ge80.}
$
\end{theorem}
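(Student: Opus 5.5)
We would prove Theorem~\ref{66} by reducing the condition $T^2(g)=g$ to a finite list of normal forms and then checking each one. The starting point is a cubic $g$ whose homogeneous cubic part has been moved, via the linear action \eqref{eq:linear-action}, to $P_{282}$ or $P_{290}$. We then still have freedom in the quadratic and affine parts of $g$, and in the stabilizer of $P$ in $GL(9,2)$ together with translations.

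First we use this residual freedom. Translations $x\mapsto x+b$ change the quadratic part by the derivative $D_bP$. The quadratic forms modulo these derivatives, taken up to the stabilizer of $P$, give a finite and small set of quadratic classes. Affine terms only shift or negate the spectrum, so they can be fixed.

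Next we write out the reciprocity equations. Write $g=P+Q+L$ and $T(g)=J(P)+Q'+L'$. Following the method of Lemma~\ref{lem:JC}, the quadratic part $Q'$ of $T(g)$ can be computed from the integer expansion \eqref{eq:integer-expansion}:
\begin{itemize}
\item the coefficient of $a^I$ with $|I|=2$ is a 7-variable restricted Walsh sum;
\item modulo the needed power of 2, only selections of monomials covering the 7 free variables contribute;
\item these are a cubic monomial together with either a disjoint quadratic monomial plus a free monomial, or two cubics overlapping in one variable, and similar combinations.
\end{itemize}
This expresses $Q'$ as a polynomial in the coefficients of $P$ and $Q$, bilinear in $Q$ and $P$, plus a term depending only on $P$. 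The linear part $L'$ is handled in the same way.

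Imposing $Q''=Q$ and $L''=L$, where the second application uses $J(P)$ in place of $P$, gives a system of $\mathbb{F}_2$-equations on the 36 quadratic and 9 linear coefficients. We solve this system and reduce the solutions modulo the stabilizer, which leaves finitely many normal forms $g_1,\dots,g_N$ in each class. The hypothesis $W_g\equiv16\pmod{32}$ is used along the way: it prunes by \eqref{eq:C-residue} and Lemma~\ref{lem:divisibility}(2), since the degree-1 part modulo 32 must vanish.

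Finally, for each normal form we exhibit a frequency $a$ with $|W_{g_i}(a)|\ge80$. Where possible we do this by hand, by choosing $a$ so that $g_i+a\cdot x$ restricts to a function with large bias on a subspace, for example a coordinate slice where the cubic terms through $x_9$ (namely $349,569,789$) degenerate. Otherwise we verify by a direct $2^9$ Walsh computation. Since $\max_a|W_g(a)|\ge80>48$ in every case, the theorem follows.

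The main obstacle is the middle step: deriving the quadratic and linear parts of $T(g)$ correctly from \eqref{eq:integer-expansion}, with the right 2-adic bookkeeping. Unlike the clean cubic case, several overlapping monomial configurations contribute, and the resulting system must be shown to have only the listed solutions up to the stabilizer. To keep this manageable, I would first try to show that translation can kill $Q$ almost entirely for these particular $P$, so that the equations become nearly linear.
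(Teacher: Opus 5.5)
Your outline has the same skeleton as the paper's argument: fix $P$, remove translations and affine terms, impose the congruence, compute the map $T$ induces on the remaining quadratic data, impose $T^2=\mathrm{id}$, classify modulo the stabilizer, and evaluate spectra. However, you never carry out the two steps that carry the content, and your expectations about the middle step are wrong.

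Since $[x_ix_j]T(g)=S_{ij}/8 \bmod 2$, where $S_{ij}=\sum_{x_i=x_j=1}(-1)^{g(x)}$ is a 7-variable cubic sum, you need $S_{ij}\bmod 16$. Let $e$ be the number of free variables a selection $\mathcal S$ covers. The contributing selections are those with $|\mathcal S|+7-e=3$: two disjoint cubics, or triples of degree types $(3,3,1)$, $(3,2,2)$, $(3,3,2)$, $(3,3,3)$ covering all 7 free variables. Two cubics sharing a variable cover only 5 variables and contribute nothing.

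The $(3,2,2)$ triples multiply two quadratic coefficients $Q_{kl}+P_{ikl}+P_{jkl}$ of the restriction. So $Q'$ is quadratic in $Q$, not bilinear in $(P,Q)$. In the paper's coordinates this is the cofactor term in $\Phi(A,B)=(B+\mathrm{cof}A,\,A+\mathrm{cof}B)$, and $\Phi^2=\mathrm{id}$ becomes $(\det A)A+B\star\mathrm{cof}(A)=0$, of degree four. The oddness of $W_g/16$, which you do not impose, adds the cubic constraint $\det A+\det B+A:B=0$.

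Nor can translations kill $Q$ ``almost entirely''. The quadratic parts of $D_bP$ span only 9 dimensions (the nine cofactors when $P=\det$). After the nine linear congruence conditions, an 18-dimensional family $R_A+C_B$ remains. So you face a genuinely nonlinear system in 18 unknowns (and its analogue for class 290), not a nearly linear one.

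The missing idea is a model that makes this system solvable by hand. Class 282 is $\det$ on $M_3(\mathbb{F}_2)$, and class 290 is the Hermitian determinant over $\mathbb{F}_4$. In the first model, invertible $A$ is excluded because $B\star I=\mathrm{tr}(B)I+B^T$ has trace $0\ne\mathrm{tr}(I)$. Hence $A$ and $B$ are singular with $A:B=0$, and a case split on rank gives ten normal forms. For class 290, $tA^2+(d+1+t\bar t)A+\bar sI=0$ leaves four.

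Your final step simply asserts that each surviving form has a Walsh coefficient of absolute value at least 80, which is the statement itself. In the paper this comes from explicit evaluation: Table 1, and $8S_1-2S_0=80$ for the four Hermitian forms.

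To turn your plan into a proof, you must either find such a model or replace the middle and final steps by an explicit exhaustive computation over the residual finite family. As written, neither is done.
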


We now can conclude that there does not exist a 9-variable Boolean function with nonlinearity 244.

\begin{theorem}\label{thm:no24}
No Boolean function on 9 variables has nonlinearity 244.
\end{theorem}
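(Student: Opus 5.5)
Suppose, for contradiction, that some $f\in\mathbb{B}_9$ has $nl(f)=244$. Since $nl(f)=256-\frac12\Vert W_f\Vert_\infty$, this means $\Vert W_f\Vert_\infty=24$. Proposition~\ref{lem:forced} then gives the spectrum $W_f\in\{\pm8,\pm24\}$. It also says that $g=(W_f-8)/16\bmod 2$ is cubic with $((-1)^f,W_g)=-10240$. Because $|((-1)^f,W_g)|\le\Vert W_g\Vert_1$, Theorem~\ref{thm:l1} forces $\Vert W_g\Vert_1=10240$ and the sign relation \eqref{eq:negative-sign}. So $g$ falls into case (A) or case (B) of Theorem~\ref{thm:l1}.

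Case (B) is ruled out directly by Proposition~\ref{prop:no-B}, which in turn rests on Lemma~\ref{lem:hyperplane}.

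In case (A), $W_g\in\{\pm16,\pm48\}$, so $W_g\equiv16\pmod{32}$ and $\max_a|W_g(a)|\le48$. We form $h=T(g)$ as in \eqref{eq:def-h}. The identity \eqref{eq:second-transform} and its consequence \eqref{eq:residue-involution} give $T(h)=g$, that is, $T^2(g)=g$. By Lemma~\ref{lem:JC}, the homogeneous cubic part $P$ of $g$ then satisfies $J(J(P))=P$ and $C(P)=0$.

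These properties survive a change of coordinates. A transformation $A\in GL(9,2)$ acts as in \eqref{eq:linear-action}, and the residue identities are preserved. We may therefore assume $P$ is one of the Brier--Langevin representatives. Proposition~\ref{prop:filter} leaves only the indices $282$, $290$ and $349$.

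Index $349$, the zero cubic part, makes $g$ quadratic. Its Walsh $\ell^1$ norm is then $512\cdot 2^s\le 8192<10240$, which is a contradiction.

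For indices $282$ and $290$, $g$ meets every hypothesis of Theorem~\ref{66}: it is cubic, satisfies $W_g\equiv16\pmod{32}$ and $T^2(g)=g$, and has $\max|W_g|\le48$. Theorem~\ref{66} excludes exactly this situation.

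So every case leads to a contradiction, and no such $f$ exists.

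The only delicate step is checking that the hypotheses of Theorem~\ref{66} are met after the linear normalization. Two points need care. First, the $T$ in Theorem~\ref{66} must match the $h$ of \eqref{eq:def-h}; indeed $(W_g/16-1)/2=(W_g-16)/32$. Second, under the substitution \eqref{eq:linear-action}, the roles of $f$ and $g$ must transform compatibly, so that $T^2(g')=g'$ still holds for the transformed $g'$.

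The substantive difficulty is not in this assembly but in Theorem~\ref{66}, which is proved in the appendix and which we take as given here.
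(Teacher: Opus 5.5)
Your proposal is correct and follows the paper's own proof of this theorem step for step. Proposition~\ref{lem:forced} together with Theorem~\ref{thm:l1} yields case (A) or (B) with the sign relation, and Proposition~\ref{prop:no-B} removes (B); in (A), Lemma~\ref{lem:JC} and Proposition~\ref{prop:filter} leave indices $282$, $290$, $349$, which are then disposed of by the quadratic $\ell^1$ bound and Theorem~\ref{66}. Your added check that $T^2(g')=g'$ survives the $GL(9,2)$ normalization is correct, and it makes explicit a point the paper only asserts.
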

\begin{proof}
Proposition~\ref{lem:forced} and Theorem~\ref{thm:l1} force case (A) or (B), with \eqref{eq:negative-sign}. Proposition~\ref{prop:no-B} excludes (B). In (A), Lemma~\ref{lem:JC} and Proposition~\ref{prop:filter} leave the zero form or the two forms \eqref{eq:P282} and \eqref{eq:P290}. The zero form was excluded by its quadratic Walsh norm. Theorem~\ref{66} excludes both remaining forms.
\end{proof}

\section{Excluding Nonlinearity 243}\label{sec:26}

\begin{theorem}\label{thm:weighted}
For every cubic $g\in \mathbb{B}_9$, we have
\begin{equation}\label{eq:weighted}
{\Vert W_g\Vert }_{1}+2|W_g(0)|\le10368.
\end{equation}
\end{theorem}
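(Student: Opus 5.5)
Our plan is to repeat the case analysis from the proof of Theorem~\ref{thm:l1}, adding a weight of $3$ at the origin instead of $1$. Set $N={\Vert W_g\Vert }_{1}+2|W_g(0)|=\sum_x w_x|W_g(x)|$, where $w_0=3$ and $w_x=1$ for $x\ne0$. By Lemma~\ref{lem:divisibility}, $8\mid W_g$ and $W_g\bmod 16$ is constant. We then split into three cases, exactly as before: $W_g\equiv8\pmod{16}$; $16\mid W_g$ with $k=W_g/16$ of constant parity; and $16\mid W_g$ with $k\bmod2$ a nonconstant affine function. In each case we take the majorant inequalities already proved and add the single term $2|W_g(0)|$, bounding it by Parseval together with whatever local information we have about the value at $0$.

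For the case $W_g\equiv8\pmod{16}$, we reuse $z_x\equiv1\pmod4$ and the majorant $\Phi$ of \eqref{eq:majorant}. This gives $8\sum_x|z_x|\le 10048$. The extra term is $16|z_0|$. A crude bound via $\sum z_x^2=4096$ is too weak, so we argue more carefully. We write $8\sum|z_x|\le 8\sum\Phi(z_x)-8(\Phi(z_0)-|z_0|)$ and then add $16|z_0|$. The resulting expression $16|z_0|-8(\Phi(z_0)-|z_0|)=24|z_0|-8\Phi(z_0)$ is a concave quadratic in $z_0$. Its maximum over $z\equiv1\pmod 4$ is attained near $|z_0|=5$ and is at most about $320-8\cdot 5=\dots$. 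A direct evaluation ($z_0=5$: $\Phi=5$, contribution $80$; $z_0=9$: $\Phi=15$, contribution $96$; $z_0=-7$: $\Phi=12$, contribution $72$) shows the maximum is $96$ or so. This gives a total of at most $10048+96\cdot\text{small}$, which is below $10368$ with a margin of at least $200$.

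For $16\mid W_g$, the extra term is $32|k_0|$, and we again absorb it into the slack of the majorants. In the constant even case, $16\sum|k_x|\le 16\sum_x k_x^2/2-16\sum_x(k_x^2/2-|k_x|)$ already leaves $8192$, and $32|k_0|\le 32\sqrt{1024}$ is harmless ($\le 9216$). In the constant odd case, we use $|j|\le (j^2+3)/4$ with slack $(|j|-1)(|j|-3)/4$. The total becomes $10240-4\sum(|k_x|-1)(|k_x|-3)+32|k_0|$. Maximizing $32t-4(t-1)(t-3)$ over odd $t$ gives $t=5$ or $t=7$, with value $160-32=128$ at $t=5$ and $224-96=128$ at $t=7$. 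This yields exactly $10240+128=10368$, which is the sharp constant and confirms that this plan matches the intended bound. In the nonconstant-parity case, we use \eqref{eq:balanced-l1} with its slack terms. We have $A\ge256+\sum_{\text{odd}}(k_x^2-1)$ and even slack $k^2/2-|k|=|k|(|k|-2)/2$. We check separately whether $0$ lies at an odd or an even position and maximize $32|k_0|$ minus the corresponding slack. For an odd $k_0=t$ the slack is at least $(t^2-1)/4+(t-1)(t-3)/4$, and for an even $k_0=t$ it is $8|t|(|t|-2)$. A short computation should give a maximum gain of at most $128$.

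The main obstacle is bookkeeping the slack correctly in the nonconstant-parity case. There the term $A/4$ in \eqref{eq:balanced-l1} couples all the odd entries, so we must verify that a large odd value at $0$ costs at least as much as it gains. We will handle this by writing the exact identity $\sum|k_x|=640-\sum_{\text{odd}}\bigl((k_x^2-1)/4+(|k_x|-1)(|k_x|-3)/4\bigr)-\sum_{\text{even}}|k_x|(|k_x|-2)/2$, dropping all slack terms except the one at $x=0$, and maximizing over $k_0$ in one variable.
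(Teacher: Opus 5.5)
Your proposal is correct and follows essentially the same route as the paper: the same case split (with constant parity separated into even and odd), the same majorants $\Phi$, $k^2/2$ and $(j^2+3)/4$, and a one-variable maximization of the extra weight at $x=0$ against the local slack, with the sharp value $10372-4(|k_0|-6)^2\le 10368$ attained only in the constant-odd case at $|k_0|\in\{5,7\}$. The computations you left loose do close: in case one $\Phi(-7)=11$ (not $12$), so the bound is $10048+96=10144$. In the nonconstant-parity case your exact identity, whose odd slack simplifies to $(|k|-1)^2/2$, gives a gain of at most $64$ for both odd and even $k_0$, i.e.\ the bound $10304=16\cdot 644$, exactly as in the paper.
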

\begin{proof}
In the case $W_g\equiv8\pmod{16}$, use \eqref{eq:z}--\eqref{eq:majorant}. On the progression $z\equiv1\pmod4$, we have
$
3|z|-\Phi(z)\le12.
$
Indeed, for positive $z$ the numerator is $-3z^2+50z-15$, whose maximum on the progression is 192 at $z=9$. For negative $z$, the numerator is $-3z^2-46z-15$, with maximum 160 at $z=-7$. Therefore, by \eqref{eq:z} and \eqref{eq:sum-majorant},
\[
\frac{{\Vert W_g\Vert }_{1}+2|W_g(0)|}{8}=\sum_{x\neq0} |z_x|+3|z_0|
\le\sum_x\Phi(z_x)+3|z_0|-\Phi(z_0)\le1268.
\]
This gives the stronger bound 10144 in this case.

If $16\mid W_g$, put $s=|k_0|$ and apply \eqref{eq:parity-majorants} to the other entries. When the parity is constantly odd,
\begin{align*}
3s+\sum_{x\ne0}|k_x|
&\le 3s+\frac{\sum_{x\ne0}(k_x^2+3)}{4}\le3s+\frac{2^{10}+3\cdot512-(s^2+3)}{4}\\
&=\frac{2593-(s-6)^2}{4}\le648,
\end{align*}
since $s$ is odd. When it is constantly even,
\[
3s+\sum_{x\ne0}|k_x|
\le\frac{1033-(s-3)^2}{2}\le516,
\]
since $s$ is even.

For balanced parity and odd $s$, let $A\ge255$ be the squared sum on the other odd entries. Then
\begin{align*}
3s+\sum_{x\ne0}|k_x|&=3s+\sum_{x\ne0,\ k_x\ odd}|k_x|+\sum_{k_x\ even}|k_x|\\
&\le3s+\frac{A+765}{4}+\frac{1024-s^2-A}{2}\\
&\le644-\frac{(s-3)^2}{2}\le644.
\end{align*}
For balanced parity and even $s$, all 256 odd entries remain. Their squared sum is at least 256, giving
\[
3s+\sum_{x\ne0}|k_x|
\le3s+\frac{1280-s^2}{2}
=\frac{1289-(s-3)^2}{2}\le644.
\]
Thus the largest upper bound is $16\cdot648=10368$.
\end{proof}

We now can prove the main theoreom by eleminating nonlinearity 243.

\begin{proof}[Proof of Theorem~\ref{thm:main}]
In view of Theorem~\ref{thm:no24}, it suffices to exclude $nl(f)=243$, or equivalently ${\Vert W_f\Vert }_{\infty}=26$. Now, all Walsh coefficients have a common residue 2 modulo 4. By Lemma~\ref{lem:residue},
\[
\ell=(W_f-2)/4\bmod2
\]
is affine. Multiplication of $W_f(a)$ by $(-1)^{\ell(a)}$ corresponds to an affine transformation of $f$, and normalizes the residue to $W_f\equiv2\pmod8$. Thus we may assume
\begin{equation}\label{eq:seven-values}
W_f(a)\in\{-22,-14,-6,2,10,18,26\}.
\end{equation}

Again by Lemma~\ref{lem:residue}, $q=(W_f-2)/8\bmod2$ is quadratic. Clearly, the seven values in \eqref{eq:seven-values} obey
\begin{equation}\label{eq:q-majorant}
W_f(a)^2\le4W_f(a)+252+320q(a).
\end{equation}

If $q\ne1$, Lemma~\ref{lem:minimum-weight} applied to $1+q$ gives $wt(q)\le384$. Summing \eqref{eq:q-majorant}, we have
\[
262144\le4\cdot512+252\cdot512+320\cdot384=253952,
\]
which is impossible. Hence $q=1$ and
\begin{equation}\label{eq:shifted-values}
W_f\in\{-22,-6,10,26\},\qquad
T:=W_f-2\in\{\pm8,\pm24\}.
\end{equation}

If $f(0)=0$, change the value to $1$ and leave all other entries fixed. The resulting Boolean function has Walsh transform $T$, contrary to Theorem~\ref{thm:no24}.

Therefore, $f(0)=1$, and
\begin{equation}\label{eq:T-energy}
\sum_aT(a)^2=2^{18}+2^{10}+4\cdot 2^9=2^{18}+8\cdot 2^9.
\end{equation}
Together with \eqref{eq:shifted-values}, this gives exactly 56 entries of magnitude 8 and 456 of magnitude 24.

Define
\begin{equation}\label{eq:g26}
g=(W_f-10)/16\bmod2=(T-8)/16\bmod2.
\end{equation}
Then $g$ is cubic by Lemma~\ref{lem:residue}. On the 56 small positions, $(-1)^gT=8$; on the other 456 positions, $(-1)^gT=-24$. Thus
\begin{equation}\label{eq:10496}
({(-1)^g},{T})=56\cdot 8-456\cdot 24=-10496.
\end{equation}
Since $W_f=T+2$ and $\sum_a(-1)^{g(a)}=W_g(0)$, symmetry of $H_9$ gives
\[
((-1)^f,W_g)=(W_f,(-1)^g)=-10496+2W_g(0).
\]
Theorem~\ref{thm:weighted} now yields the contradiction
\[
10496=2W_g(0)-((-1)^{f},{W_g})\le2|W_g(0)|+{\Vert W_g\Vert }_1\le10368.
\]
Hence nonlinearity 243 is impossible, and the result follows.
\end{proof}

\section{Conclusions}
This paper determined the covering radius of the Reed--Muller Code RM(1,9).
That is, the maximum nonlinearity of 9-variable Boolean functions is 242. Since the known 9-variable functions with nonlinearity 242 are not balanced, it is still an open problem to determine whether there exists a balanced 9-variable Boolean function with nonlinearity 242, which could be further considered in the future.

\section*{Acknowledgment}

The authors would like to thank the financial support from the National Natural Science Foundation of China (Grant 62172230).

\begin{appendix}
\section{Eliminating both remaining classes 282 and 290}
Define $T(g)(a)=\frac{W_g(a)-16}{32}\pmod2$. We have $T^2(g)=g$.
In the following, we will prove that classes $282$ and $290$ are impossible.

For a cubic Boolean function $p$ satisfying $W_p(a)\equiv16\pmod{32}$,  summing
its defining congruences for $T(p)$ gives  (the coefficient of $x_ix_j$ in $T(p)$)
\begin{equation}
[x_ix_j]T(p)=
\frac{wt(p|_{x_i=x_j=0})}{4}\pmod2,
\label{eq:quadratic-coeff}
\end{equation}
and similarly  (the coefficient of $x_ix_jx_k$ in $T(p)$)
\begin{equation}
[x_ix_jx_k]T(p)=
\frac{wt(p|_{x_i=x_j=x_k=0})}{2}\pmod2.
\label{eq:cubic-coeff}
\end{equation}
All coefficient calculations below may be obtained from the inclusion-exclusion
identity
\begin{equation}
wt\left(\sum_{M\in\mathcal M}x_M\right)
=
\sum_{\varnothing\ne\mathcal S\subseteq\mathcal M}
(-2)^{|\mathcal S|-1}
2^{m-\left|\bigcup_{M\in\mathcal S}M\right|}.
\label{eq:weight-formula}
\end{equation}

\subsection{Class 282: the determinant model}

Let $X=(x_{ij})\in M_3(F_2)$,
and $D(X)=\det X$.
This cubic represents class $282$. Indeed, the transformation is
\[
(y_1,\ldots,y_9)=(  t_1,\ t_2+t_6,\ t_9,\ t_2,\ t_8,
t_3+t_4+t_8,\ t_5,\ t_7,\ t_3),
\]
where $t_1,\ldots,t_9$ are the row-major entries of $X\in M_3(F_2)$.
For $A,B\in M_3(F_2)$ define
\begin{align}
R_A(X)&=\sum_{i,j}A_{ij}\prod_{k\ne j}x_{ik},\\
C_B(X)&=\sum_{i,j}B_{ij}\prod_{k\ne i}x_{kj}.
\end{align}
By a translation and addition of an affine function, every function in this class has the form
\begin{equation}
g(X)=G_{A,B}(X):=D(X)+R_A(X)+C_B(X).
\label{eq:det-normal-form}
\end{equation}
The nine nonconstant coefficients of $W_g/16\bmod2$ give nine independent
linear conditions on the 36 quadratic coefficients. In determinant coordinates,
the resulting kernel is the 18-dimensional space represented by $R_A+C_B$.

Write $\text{cof}(A)$ for the cofactor matrix, without transposing it. A direct
application of \eqref{eq:weight-formula} gives
\begin{equation}
\frac{W_{G_{A,B}}(Y)}{16}
\equiv
1+\det A+\det B+\sum_{i,j}A_{ij}B_{ij}
\pmod2,
\label{eq:det-walsh}
\end{equation}
independently of $Y$. Hence,
\begin{equation}
\det A+\det B+A:B=0,
\label{eq:det-constraint}
\end{equation}
where $A:B=\sum_{ij}A_{ij}B_{ij}$.

On the quadratic parameters, $T$ induces the map
\begin{equation}
\Phi(A,B)=
\bigl(B+\text{cof}(A),\ A+\text{cof}(B)\bigr).
\label{eq:det-Phi}
\end{equation}
Thus $T^2(g)=g$ implies $\Phi^2(A,B)=(A,B)$.
Define
\[
U\star V=\text{cof}(U+V)+\text{cof}(U)+\text{cof}(V).
\]
Using $\text{cof}(\text{cof}(A))=(\det A)A$,
we obtain
\begin{align}
(\det A)A+B\star \text{cof}(A)&=0,\label{eq:det-recip1}\\
(\det B)B+A\star \text{cof}(B)&=0.\label{eq:det-recip2}
\end{align}

Suppose $A$ is invertible. Under
$
(A,B)\longmapsto(PAQ,P^{-T}BQ^{-T})
$
we may normalize $A=I$. Then \eqref{eq:det-recip1} becomes
$
I+B\star I=0.
$
But
\begin{equation}
B\star I=(\text{tr}(B))I+B^T,
\end{equation}
whose trace is zero in characteristic two, whereas $\text{tr}(I)=1$. This is impossible.
Therefore $A$ is singular; by symmetry, so is $B$.
Hence
\begin{equation}
\operatorname{rank} A,\operatorname{rank} B\le2.
\label{eq:det-rank}
\end{equation}
Since both determinants vanish, \eqref{eq:det-constraint} gives
$
A:B=0.
$

Transposition exchanges $(A,B)$ with $(B^T,A^T)$, so we may assume
$\operatorname{rank} A\ge\operatorname{rank} B$.

If $\operatorname{rank} A=0$, then $A=B=0$.
If $\operatorname{rank} A=1$, normalize $A=E_{11}$. Since $\operatorname{rank} B\le1$ and
$A:B=0$, we have $B_{11}=0$. Up to the stabilizer of $E_{11}$ the nonzero
possibilities of $B$ are $E_{12}$, $E_{21}$ and $E_{22}$.

If $\operatorname{rank} A=2$, normalize $A=D_2:=\text{diag}(1,1,0)$.
Equations \eqref{eq:det-recip1}-\eqref{eq:det-recip2} force
\begin{equation}
B=
\begin{pmatrix}
0&0&u_1\\
0&0&u_2\\
v_1&v_2&b
\end{pmatrix},
\qquad
u_1v_1+u_2v_2=0.
\label{eq:B-rank2}
\end{equation}
The stabilizer of $D_2$ reduces this to
$
B\in
\{0,E_{33},E_{13},E_{31},E_{13}+E_{32}\}.
$
Thus, up to the allowed equivalences, the complete list is
\begin{equation}
\begin{gathered}
(0,0),\quad (E_{11},0),\quad(E_{11},E_{12}),\quad(E_{11},E_{21}),\quad(E_{11},E_{22}),\\
(D_2,0),\quad(D_2,E_{33}),\quad(D_2,E_{13}),\quad(D_2,E_{31}),\quad(D_2,E_{13}+E_{32}).
\end{gathered}
\label{eq:det-list}
\end{equation}

\begin{table*}[!h]
\centering
\caption{Values of $W_{G_{A,B}}(Y)$.} \label{tab:enumeration1}
% ``Odd residue'' counts the cases in which $W_{P+Q}/16$ is constantly odd. Retained spectra have $W_{P+Q}(x)\in\{\pm16,\pm48\}$ for every $x$; for each of these, $F_Q=-\text{sgn}  (W_{P+Q})$.}\label{tab:enumeration
\setlength{\tabcolsep}{5pt}
\begin{tabular}{@{}crrrrrr@{}}
\hline
$A$ & $B$ & $Y$ & $W_{G_{A,B}}(Y)$\\
\hline
$0$ & $0$ & $0$ & $176$\\
$E_{11}$ & $0$ & $0$ & $112$\\
$E_{11}$ & $E_{12}$ & $0$ & $80$\\
$E_{11}$ & $E_{21}$ & $0$ & $80$\\
$E_{11}$ & $E_{22}$ & $0$ & $112$\\
$D_2$ & $0$ & $0$ & $80$\\
$D_2$ & $E_{33}$ & $0$ & $80$\\
$D_2$ & $E_{13}$ & $0$ & $80$\\
$D_2$ & $E_{31}$ & $E_{12}$ & $80$\\
$D_2$ & $E_{13}+E_{32}$ & $E_{12}$ & $80$\\
\hline
\end{tabular}
\end{table*}

For these ten normal forms, direct Walsh evaluation gives Table 1. Every entry in the table is therefore an elementary finite character sum.
In particular,
\begin{equation}
\max_Y|W_{G_{A,B}}(Y)|\ge80
\end{equation}
for every reciprocal form in class $282$.
Thus class $282$ is impossible whenever the Walsh spectrum is restricted to
$\{\pm16,\pm48\}$.

\subsection{Class 290: the Hermitian determinant model}

Let $\mathbb{F}_4=\mathbb{F}_2(\omega)$, $\omega^2=\omega+1$, and $\bar z=z^2$.
Consider
\[
X=
\begin{pmatrix}
 a&z&\bar y\\
 \bar z&b&x\\
 y&\bar x&c
\end{pmatrix},
\qquad a,b,c\in \mathbb{F}_2,\quad x,y,z\in \mathbb{F}_4.
\]
Its determinant is
\begin{equation}
N(X)=abc+a\,x\bar x+b\,y\bar y+c\,z\bar z+\text{Tr}_{4/2}(xyz).
\label{eq:herm-det}
\end{equation}
Its cubic part represents class $290$. Indeed, the transformation is
\begin{align*}
(y_1,\ldots,y_9)=(&t_3+t_4,\ t_5+t_7,\ t_5,\ t_4+t_5+t_6+t_7,t_2+t_3+t_5,\\
& t_1+t_2+t_5+t_7,\ t_4+t_6+t_8+t_9,\ t_5+t_7+t_9,\ t_3),
\end{align*}
where
$(a,b,c)=(t_1,t_2,t_3)$, $x=t_4+\omega t_5$,\ $y=t_6+\omega t_7$ and $z=t_8+\omega t_9$.

Use the nondegenerate symmetric pairing
$
(X ,Y)=\text{tr}(XY)\in \mathbb{F}_2.
$
For $A\in M_3(\mathbb{F}_4)$ define
\begin{equation}
Q_A(X)=
\text{Tr}_{4/2}\left(\sum_{i,j}A_{ij}\prod_{k\ne j}X_{ik}\right).
\label{eq:herm-Q}
\end{equation}
After the allowed normalization, every function in this class satisfying the Walsh congruence has the form
$
g=N+Q_A.
$

Put
$t=\text{tr}(A)$, $s=\text{tr}(\text{cof} (A))$, and $d=\text{det}(A)$.
A direct expansion gives
\begin{equation}
\frac{W_{N+Q_A}(Y)}{16}
\equiv
1+t\bar t+\text{Tr}_{4/2}(d)
\pmod2,
\label{eq:herm-walsh}
\end{equation}
so
\begin{equation}
t\bar t=\text{Tr}_{4/2}(d).
\label{eq:herm-constraint}
\end{equation}

The induced transformation on $A$ is
$
\Phi(A)=\overline{\text{cof}(A)}+tI.
$
The condition $T^2(g)=g$ implies $\Phi^2(A)=A$. Using
$
\text{adj}(A)=A^2+tA+sI,
$
we obtain
\begin{equation}
 t\,A^2+(d+1+t\bar t)A+\bar s\,I=0.
\label{eq:herm-recip}
\end{equation}

If $t=0$, then \eqref{eq:herm-constraint} gives $d\in\{0,1\}$.
If $d=0$, equation \eqref{eq:herm-recip} gives $A=\bar sI$, and taking traces
forces $s=0$, hence $A=0$. If $d=1$, equation \eqref{eq:herm-recip} gives
$s=0$, so the characteristic polynomial
\[
\chi_A(\lambda)=\lambda^3+1=(\lambda+1)(\lambda+\omega)(\lambda+\omega^2),
\]
and therefore
$
A\sim \text{diag}(1,\omega,\omega^2).
$

If $t\ne0$, then $t\bar t=1$, so
$
d\in\{\omega,\omega^2\}.
$
Equation \eqref{eq:herm-recip} becomes
$
\bar t A^2+dA+\bar sI=0.
$
Its annihilating polynomial has nonzero derivative $d$, so $A$ is semisimple.
It cannot be scalar because a nonzero scalar $3\times3$ matrix has determinant $1$.
Thus its eigenvalues have multiplicities $2,1$, and after the allowed scalar and
similarity normalizations we obtain
\[
A\sim \text{diag}(\omega,\omega,1),
\qquad
A\sim \text{diag}(\omega^2,\omega^2,1).
\]

Thus only the four forms
\begin{equation}
0,
\quad
\text{diag}(1,\omega,\omega^2),
\quad
\text{diag}(\omega,\omega,1),
\quad
\text{diag}(\omega^2,\omega^2,1)
\label{eq:herm-four}
\end{equation}
remain.

Set
\[
Y_1=
\begin{pmatrix}
1&\omega&\omega^2\\
\omega^2&1&1\\
\omega&1&1
\end{pmatrix},
\qquad
Y_2=
\begin{pmatrix}
1&0&1\\
0&1&1\\
1&1&1
\end{pmatrix}.
\]
For $A=\text{diag}(r,s,t)$, put
\[
\psi(x,y,z)=
\text{Tr}_{4/2}\bigl(xyz+r z\bar y+s\bar z x+t y\bar x
+x\bar u+y\bar v+z\bar w\bigr),
\]
where $(u,v,w)=(Y_{23},Y_{31},Y_{12})$.
Summing first over $a,b,c$ gives
\begin{equation}
W_{N+Q_A}(Y)=8S_1-2S_0,
\label{eq:herm-char-sum}
\end{equation}
where
\begin{align}
S_1&=
\sum_{\substack{x\bar x=\epsilon_1\\y\bar y=\epsilon_2\\z\bar z=\epsilon_3}}
(-1)^\psi,\\
S_0&=
\sum_{x,y,z\in \mathbb{F}_4}
(-1)^{\psi+x\bar x+y\bar y+z\bar z+\epsilon_1+\epsilon_2+\epsilon_3},
\end{align}
and each $\epsilon_i\in \mathbb{F}_2$ is fixed. For the four normal forms in \eqref{eq:herm-four}, with frequencies
$E_{11},Y_1,Y_2,Y_2$, respectively, direct calculation gives
\[
(S_1,S_0)=(3,-28),\quad(7,-12),\quad(11,4),\quad(11,4).
\]
Hence in every case $W_{N+Q_A}(Y)=8S_1-2S_0=80$.
The elementary identity
\[
\sum_{z\in \mathbb{F}_4}
(-1)^{\text{Tr}_{4/2}(Rz)+z\bar z}
=2-4\cdot1_{R=0}
\]
provides a direct reduction of the character sums.
Therefore
\begin{equation}
\max_Y|W_{N+Q_A}(Y)|\ge80
\end{equation}
for every reciprocal form in class $290$.
Thus class $290$ is impossible whenever the Walsh spectrum is contained in
$\{\pm16,\pm48\}$.

So, we have proved Theorem \ref{66}.
\end{appendix}

\begin{thebibliography}{00}

%% \bibitem[Author(year)]{label}
%% Text of bibliographic item

%\bibitem[ ()]{}

\bibitem{BrierLangevin}
E. Brier and P. Langevin, ``Classification of Boolean Cubic Forms of Nine Variables,'' in \emph{Proc. 2003 IEEE Information Theory Workshop}, 2003, pp. 179--182.

\bibitem{Car} C. Carlet, ``Boolean Functions for Cryptography and Coding Theory,'' Cambridge University Press, 2021.

\bibitem{ClassificationData}
E. Brier and P. Langevin, ``Cubics in Nine Variables''. [Online]. Available: https://langevin.univ-tln.fr/project/cubics/.

\bibitem{Dough} R. Dougherty, R. D. Mauldin, and M. Tiefenbruck, ``The Covering Radius of the Reed--Muller Code \(RM(m-4,m)\) in \(RM(m-3,m)\),'' \emph{IEEE Transactions on Information Theory}, 68:1 (2022), 560--571.

\bibitem{Elim} D. Elimelech, H. Wei, and M. Schwartz, ``On the Generalized Covering Radii of Reed--Muller Codes,'' \emph{IEEE Transactions on Information Theory}, 68:7 (2022), 4378--4391.

\bibitem{Gao0} J. Gao, ``Numerical Results and Asymptotic Lower Bound on the Covering Radius of Reed--Muller Codes \(RM(2,11)\) and \(RM(3,n)\),'' \emph{IEICE Transactions on Fundamentals}, E109-A:1 (2026), 35--46.

\bibitem{Gao} J. Gao, H. Kan, Y. Li and Q. Wang,  ``The Covering Radius of the Third-Order
Reed--Muller Code RM(3,7) is 20,'' \emph{IEEE Transactions on Information Theory}, 69:6 (2023), 3663--3671.

\bibitem{Gillot} V. Gillot and P. Langevin, ``Covering Radius of \(RM(4,8)\),'' \emph{Advances in Mathematics of Communications}, 18:2 (2024), 383--393.

\bibitem{Hou} X. Hou,  ``Covering Radius of the Reed--Muller Code R(1, 7) - A Simpler Proof,'' \emph{Journal of Combinatorial Theory, Series A}, 74:2 (1996), 337--341.

\bibitem{Hou2} X. Hou,  ``The Covering Radius of R(1, 9) in R(4, 9), " \emph{Designs, Codes and Cryptography}, 8:3 (1996), 285--292.

\bibitem{Kasami1} T. Kasami and N. Tokura, ``On the Weight Structure of the Reed--Muller
Codes,'' \emph{IEEE Transactions on Information Theory}, 16 (1970), 752--759.

\bibitem{Kavut} S. Kavut and M. D. Y{\"u}cel,  ``9-variable Boolean Functions with Nonlinearity 242 in the Generalized Rotation Symmetric Class,'' \emph{Information and Computation}, 208:4 (2010), 341--350.

\bibitem{Kavut2} S. Kavut, S. Maitra and M. D. Y{\"u}cel,  ``Search for Boolean Functions with Excellent Profiles in the Rotation Symmetric Class,'' \emph{IEEE Transactions on Information Theory}, 53:5 (2007), 1743--1751.

\bibitem{Logachev} O. A. Logachev, A. A. Salnikov and V. V. Yashchenko, ``Boolean Functions in Coding Theory and Cryptography,'' American Mathematical Society, 2012.

\bibitem{Mykkeltvei} J. Mykkeltveit,  ``The Covering Radius of the (128,8) Reed--Muller Code is 56 (Corresp.),'' \emph{IEEE Transactions on Information Theory}, 26:3 (1980), 359--362.

\bibitem{MC} F. J. MacWilliams and N. J. Sloane, ``The Theory of Error-correcting Codes,''
North Holland, 1977.

\bibitem{Sch} K. Schmidt,  ``Asymptotically optimal Boolean functions,''\emph{Journal of Combinatorial Theory, Series A}, 164 (2019), 50--59.

\bibitem{Wang} Q. Wang, ``The Covering Radius of the Reed--Muller Code RM(2,7) is 40,'' \emph{Discrete Mathematics}, 342 (2019), Art. no. 111625.
\end{thebibliography}
\end{document}